\newtheorem{thm}{Theorem}[section]
\newtheorem{defn}[thm]{Definition}
\newtheorem{lemma}[thm]{Lemma}
\newtheorem{conj}[thm]{Conjecture}
\newtheorem{remark}[thm]{Remark}
\newcommand{\bmb}{\left( \begin{array}{rr}}
\newcommand{\enm}{\end{array}\right)}
\newcommand{\E}{\mathcal E}
\newcommand{\g}{{\mathfrak{g}}}
\newcommand{\gl}{{\mathfrak{gl}}}
\newcommand{\bX}{{\mathbf X}}
\renewcommand{\sl}{{\mathfrak{sl}}}
\newcommand{\KR}{{{\rm KR}}}
\newcommand{\Z}{{\mathbb Z}}
\newcommand{\Q}{{\mathcal Q}}
\newcommand{\R}{{\mathbb R}}
\newcommand{\N}{{\mathbb N}}
\newcommand{\ba}{{\mathbf a}}
\newcommand{\bn}{{\mathbf n}}
\newcommand{\bx}{{\mathbf x}}
\newcommand{\al}{{\alpha}}
\newcommand{\half}{\frac12}
\numberwithin{equation}{section}
\numberwithin{table}{section}
\tikzset{>=latex}
\begin{document}

\title{Macdonald operators and quantum Q-systems for classical types}
\author{Philippe Di Francesco} 
\address{Department of Mathematics, University of Illinois MC-382, Urbana, IL 61821, 
U.S.A. e-mail: philippe@illinois.edu and 
Institut de Physique Th\'eorique du Commissariat \`a l'Energie Atomique, 
Unit\'e de Recherche associ\'ee du CNRS,
CEA Saclay/IPhT/Bat 774, F-91191 Gif sur Yvette Cedex, 
FRANCE. e-mail: philippe.di-francesco@cea.fr
}
\author{Rinat Kedem} 
\address{Department of Mathematics, University of Illinois MC-382, Urbana, IL 61821, 
U.S.A. e-mail: rinat@illinois.edu}

\begin{abstract}
We propose solutions of the quantum Q-systems of types $B_N,C_N,D_N$ in terms of 
$q$-difference operators, generalizing our previous construction for the Q-system of type $A$. The difference operators are interpreted as $q$-Whittaker limits of discrete time evolutions of  Macdonald-van Diejen type operators.
We  conjecture that these new operators act as raising and lowering operators for $q$-Whittaker 
functions, which are special cases of graded characters of fusion products of KR-modules.
\end{abstract}

\maketitle

\date{\today}

\centerline{\it To Nicolai Reshetikhin on his 60th birthday}

\section{Introduction}

The characters of  tensor products of KR-modules of Yangians, quantum affine algebras or affine algebras have fermionic formulas, generalizing Bethe's original counting formula of the Bethe eigenstates of the Heisenberg spin chain \cite{Bethe}. The fermionic formulas, in the case of KR-modules of Yangians of the classical Lie algebras $\g=ABCD$, were conjectured by Kirillov and Reshetikin \cite{KR} and further generalized in \cite{HKOTY}. They were proved for the case of any simple Lie algebra $\g$ in \cite{AK,DFK08}. In the course of the proof, it becomes clear that there is a close connection between solutions of recursion relations known as the Q-systems and the fermionic formulas.

In \cite{KR,HKOTY}, $q$-analogs of the fermionic characters of tensor products of KR-modules, or graded characters, are also presented. There are several interpretations of the grading of these tensor products,
which gives rise to these $q$-analogues, or graded characters: (1) as linearized energy function for the
corresponding generalized inhomogeneous Heisenberg spin chain, (2) as a charge function for the crystal limit of the corresponding quantum affine algebra modules, when it exists, or (3) as the natural grading 
 of the underlying affine algebra \cite{FL}. The latter definition was used in \cite{AK,qKR,Simon} to prove the $q$-graded fermionic character formulas.  It turns out that there is a close connection between the graded character formulas and $q$-deformed versions of Q-systems known as the quantum Q-systems.
 
Quantum Q-systems are defined by quantizing \cite{BZ} the cluster algebraic \cite{FZ} structure of the classical Q-systems \cite{Ke07,DFKQcluster}.  They are recursion relations for non-commuting variables $\{\Q_{a,k}\}$, with $a$ running over the Dykin labels of $\g$ and $k\in\Z$.

Graded characters are Weyl-symmetric functions with coefficients in $\Z_+[q]$. The relation with the quantum Q-system can be schematically described as follows. One can construct a linear functional $\phi$ from the ordered product of (opposite, with parameter $q^{-1}$) quantum Q-system solutions $\Q_{a,k}^*$ to the graded characters  of the corresponding 
tensor product: 
\begin{equation}\label{linfun}
\chi_{\bn}(q^{-1};\bx)= \phi\left( {\displaystyle \prod_{a,k}^{\rightarrow} (\Q^*_{a,k})^{n_{a,k}}}\right),
\end{equation}
where $\chi_\bn(q;\bx)$ is the graded character of the tensor product (or fusion product in the sense of \cite{FL}) of KR-modules $\otimes_{a,k} \KR_{k\omega_a}^{\otimes n_{a,k}}$, $\bn=\{n_{a,k}\}$ denoting the collection of tensor powers,
and $\bx=(x_1,x_2,...,x_N)$. Here, $\KR_{k\omega_a}$ is a $\g$-module, the restriction of the KR-module of the (quantum) affine algebra module, which has highest weight $k\omega_a$, $k\in\N$ and $\omega_a$ being one of the fundamental weights of $\g$. The arrow on top of the product sign refers to a specific ordering of the terms \cite{qKR,Simon}, in which long and short roots play different roles.

Tensoring the tensor product above by an extra factor $\KR_{k'\omega_b}$  corresponds to the insertion of the factor $\Q_{b,k'}^*$ on the right in the product inside the functional, if $k'$ is sufficiently large, so that the product remains ordered.

In the case of $\g=\sl_N$ we introduced \cite{DFK15,DFK16} a set of $q$-difference operators 
${\mathcal D}_{b,k'}$,  acting on the space of symmetric functions of $\bx$ to the right, and representing the insertion on the right of the factor $\Q^*_{b,k'}$ in the linear functional of equation \eqref{linfun}. 
The following diagram explains the action by difference operators on the space of symmetric functions:
\begin{center}
\begin{tikzcd}[column sep=large,row sep=large]
{\displaystyle \prod_{a,k}^{\rightarrow} (\Q_{a,k}^*)^{n_{a,k}}}
\arrow[r, "\Q^*_{b,k'}"]
\arrow[d, "\phi"] 
& {\displaystyle\left(\prod_{a,k}^{\rightarrow} (\Q_{a,k}^*)^{n_{a,k}}\right)\Q^*_{b,k'}}
\arrow[d, "\phi"] \\
\chi_{\bn}(q^{-1};\bx)
\arrow[r,  "{\mathcal D}_{b,k'}"]
& \chi_{\bn+\epsilon_{b,k'}}(q^{-1};\bx)
\end{tikzcd}
\end{center}
The result is an explicit expression for graded characters as the iterated action of 
$q$-difference operators ${\mathcal D}_{a,k}$ on the constant $1$.
The difference operators satisfy the quantum Q-system. We refer to them as the functional representation of the quantum Q-system. 

In \cite{DFK15}, it was observed that the difference operators $\mathcal D_{a,k}$ for $k=0$ are the $t\to \infty$ limit of
the Macdonald difference operators of type $A$, of which Macdonald polynomials form a set of common eigenfunctions.
Thus, in \cite{DFKqt}, we identified the  $t$-deformation of the $A$ type quantum Q-system as the spherical Double Affine Algebra Hecke (sDAHA) of type $A$. This is the algebra which underlies Macdonald theory \cite{macdo,Cheredbook}.

In the $t\to\infty$ limit, in which Macdonald polynomials tend to (dual) $q$-Whittaker functions with parameter $q^{-1}$, we further
identified in \cite{DFK15} the operator  representation ${\mathcal D}_{a,k}$ of the quantum Q-system when $k=1$ (resp. $k=-1$) as 
the $t\to\infty$ limits of the Kirillov-Noumi \cite{Kinoum} raising (resp. lowering) operators. These operators, acting on a Macdonald polynomial indexed by some Young diagram,
have the effect of adding (resp. subtracting) a column of $a$ boxes to the Young diagram. Equivalently, using the correspondence of the Young diagram $\lambda$ with a dominant $\gl_N$-weight, it corresponds to adding or subtracting the fundamental weight $\omega_a$.

As a consequence, in the case where all KR-modules in the tensor product are fundamental modules
($n_{a,k}=0$ unless $k=1$), the graded characters are identified as limits of Macdonald polynomials as $t\to\infty$ or $t=0$ upon changing $q\to q^{-1}$, and can therefore be identified with specialized $q$-Whittaker functions. See also \cite{LNSSS}.

We remark that the above difference operators can be compared to the so-called ``minuscule monopole operators" representing
the Coulomb branches of 4D $N=4$ quiver gauge theories \cite{BFN} in the particular case of the ``Jordan quiver" with a single node, and when the equivariant parameter $t$ is taken to infinity (with the result of imposing that the representation $N$ is trivial).

It is natural to look for the generalization of the functional representation  of the   quantum Q-systems,  corresponding to the affine algebras of types $BCD$. These are described in terms of the root systems of the finite $BCD$ type.
 Motivated by the results in type $A$,
we expect the functional representation of the other type quantum Q-systems to involve
the  $t\to\infty$ limits of the corresponding $BCD$ type generalized Macdonald operators. 
In this paper, we present, without proof, a set of such difference operators. Our main conjecture  \ref{qqconj} is that these operators satisfy the relevant quantum Q-systems relations.

The construction of the $BCD$ type difference operators is best understood by thinking of the
quantum Q-systems as evolution equations in the discrete time variable $k\in \Z$ for the elements 
$\Q_{a,k}$. In type $A$, from the relation to sDAHA, we noted in \cite{DFKqt} that 
the  discrete time evolution $k\to k+1$ is  given by the adjoint action of a generator of the $SL_2(\Z)$ symmetry of  DAHA. The latter is also expressed as the adjoint action of the ``Gaussian" function 
$\gamma^{-1}$ of the variables $\bx$, where
\begin{equation}\label{gaussian}
\gamma\equiv\gamma(\bx)=e^{\sum_{i=1}^N \frac{{\rm Log}(x_i)}{2{\rm Log}(q)}} ,
\end{equation}
and such that ${\mathcal D}_{a,k+1} \propto \gamma^{-1} {\mathcal D}_{a,k} \gamma$.

The aim of the present paper is to present constructions of difference operator solutions to the
quantum Q-systems of types $BCD$, such that they coincide, at $k=0$, with the $t\to\infty$ limit of suitable Macdonald operators, 
and which have raising/lowering properties at $k=\pm 1$. To this end, we first identify the  
$t\to\infty$ limits
of suitable Macdonald-type operators in types $BCD$ by use of works of Macdonald and van Diejen \cite{macdoroot,vandiej,vandiejtwo}. 
Next, we construct their time evolution, by a suitable Gaussian conjugation. Our main result are the conjectures \ref{qqconj} and \ref{knconj} stating that (1) these
operators obey a renormalized version of the quantum Q-systems in types $BCD$ and (2) the operators at times $k=\pm 1$ act as raising/lowering operators on the corresponding $q$-Whittaker functions.

\bigskip

\noindent{\bf Acknowledgments.} RK and PDF acknowledge support by NFS grant DMS18-02044
PDF is partially supported by the Morris and Gertrude Fine endowment.
RK thanks the Institute Physique Th\'eorique of CEA/Saclay for hospitality.

\section{Q-systems and quantum Q-systems}

\subsection{Weights and Roots}\label{rootsec}

Let $\g$ be a Lie algebras of classical type, ${\mathfrak g}\in
\{A_{N-1},D_N,B_N,C_N\}$.
For each of these, we list in table \ref{data} the standard data of fundamental weights $\omega_a$, the simple roots $\alpha_a$, and the conditions on the non-increasing sequence $\lambda=(\lambda_1\geq\lambda_2\geq \cdots \geq \lambda_N)$ corresponding to dominant weights $\sum_a n_a \omega_a=\sum_i \lambda_i e_i$ ($n_a\in\Z_+$). 
In table \ref{data}, the set $\{e_a\}_{a=1}^N$ is the standard basis of $\R^N$ whereas $\{\hat{e}_a=e_a-\rho/N\}_{a=1}^{N-1}$ where $\rho$ is the sum over all the basis elements.
\begin{table}\label{data}
\begin{center}
\begin{tabular}{|c | c | c | c |}
\hline
Algebra & Fundamental weights $\omega_a$ & simple roots $\alpha_a$ &  $\lambda$ \\
[.1cm]
\hline
$A_{N-1}$ &$ \displaystyle{\omega_a = \sum_{i=1}^a \hat e_i,\ a\in[1,N-1]} $&
$  e_a- e_{a+1} $ & $\lambda_a\in\Z_+$\\
[.1cm]
\hline
$B_N$ & $\displaystyle{\omega_a=\left\{\begin{array}{ll}
\displaystyle{\sum_{i=1}^a e_i,}& a<N;\\ \displaystyle{\frac{1}{2} \sum_{i=1}^N e_i,}& a=N. \\\end{array}\right.} $
&$ \begin{array}{l}e_a-e_{a+1},\ a<N\\ e_N, \ a=N\end{array}$
 & $\begin{array}{c}
 \lambda_a\in\Z_+ \hbox{ for all $a$ or}\\
 \lambda_a\in  \Z_++\half \hbox{ for all }a\end{array}$\\
[.1cm]
 \hline
$C_N$ & $\omega_a = \displaystyle{\sum_{i=1}^a e_i},\ a\in[1,N]$
&
$ \displaystyle{\begin{array}{l} e_a-e_{a+1},\ a<N\\
2 e_N,\ a=N \end{array}}$
& $\lambda_a\in\Z_+$ \\
[.1cm]
\hline
$D_N$ &$\begin{array}{l}
\omega_a = \displaystyle\sum_{i=1}^a e_i, \ a<N-1\\
\omega_{N-1}=  \half(\omega_{N-2} + e_{N-1}-e_N)\\
\omega_N= \half(\omega_{N-2}+e_{N-1}+e_N)\end{array} $
& $\displaystyle{\begin{array}{l} e_a - e_{a+1},\ a<N\\ e_a+e_{a+1} ,\ a=N
\end{array}}$
&$\begin{array}{c}\lambda_a\in\Z \hbox{ for all $a$ or }\\
 \lambda_a\in \Z+\half \hbox{for all $a$},\\
\lambda_{N-1}\geq |\lambda_N| \geq 0.
\end{array}$
 \\
\hline
\end{tabular}
\end{center}
\vskip.5cm
\caption{Root and weight data for the classical Lie algebras}
\end{table}

We also denote by $t_a$ the integers $\frac{2}{||\al_a||^2}$, so that $t_a=1$ for long roots and $t_a=2$ for the short roots in types $BC$.

\subsection{The classical Q-systems for simply-laced affine $ABCD$}

The Q-systems  are recursion relations for the variables $\{Q_{a,k}\}$ where $a$ is a label in the Dynkin diagram, and $k$ is any integer. We list the Q-systems associated with types $ABCD$ \cite{KR,KNS}. The boundary condition $Q_{0,k}=0$ is assumed in all cases.
\begin{eqnarray}
 \g=A_{N-1}: & &
Q_{a,k+1}Q_{a,k-1}=Q_{a,k}^2-Q_{a+1,k}Q_{a-1,k}, \qquad (a\in [1,N-1]),\label{ANqsys}\\
&& \quad Q_{N,k}=1. \nonumber \\ \nonumber\\
 \g=B_{N}: &&
Q_{a,k+1}Q_{a,k-1}=Q_{a,k}^2-Q_{a+1,k}Q_{a-1,k} ,\qquad (a\in [1,N-2]),\nonumber\\
&& Q_{N-1,k+1}\,Q_{N-1,k-1}=(Q_{N-1,k})^2- Q_{N,2k}\, Q_{N-2,k},\label{BNqsys}\\
&& Q_{N,2k+1}\,Q_{N,2k-1}= (Q_{N,2k})^2 -(Q_{N-1,k})^2 ,\nonumber\\
&& Q_{N,2k+2}\,Q_{N,2k}= (Q_{N,2k+1})^2 -Q_{N-1,k+1}\,Q_{N-1,k}. \nonumber\\ \nonumber\\
 \g=C_{N}: &&
Q_{a,k+1}Q_{a,k-1}=Q_{a,k}^2-Q_{a+1,k}Q_{a-1,k}, \qquad (a\in [1,N-2]),\nonumber\\
&& Q_{N-1,2k+1}Q_{N-1,2k-1}=Q_{N-1,2k}^2 -  Q_{N-2,2k} Q_{N,k}^2,\label{CNqsys}\\
&& Q_{N-1,2k+2}Q_{N-1,2k}=Q_{N-1,2k+1}^2 - Q_{N-2,2k+1} Q_{N,k+1}Q_{N,k} .\nonumber\\
&& Q_{N,k+1}\,Q_{N,k-1}=Q_{N,k}^2-  Q_{N-1,2k},\nonumber\\ \nonumber\\
\g=D_{N}: &&
Q_{a,k+1}Q_{a,k-1}=Q_{a,k}^2-Q_{a+1,k}Q_{a-1,k}, \qquad (a\in [1,N-3]),\nonumber\\
&& Q_{N-2,k+1}Q_{N-2,k-1}=Q_{N-2,k}^2-Q_{N,k}Q_{N-1,k}Q_{N-3,k},\label{DNqsys} \\
&& Q_{a,k+1}Q_{a,k-1}=Q_{a,k}^2-Q_{N-2,k} \qquad (a\in \{N-1,N\}) .\nonumber
\end{eqnarray}
\vskip.2in
These recursion relations (for $k\geq 1$) were originally observed \cite{KR} to be relations satisfied by characters of finite-dimensional irreducible Yangian modules.

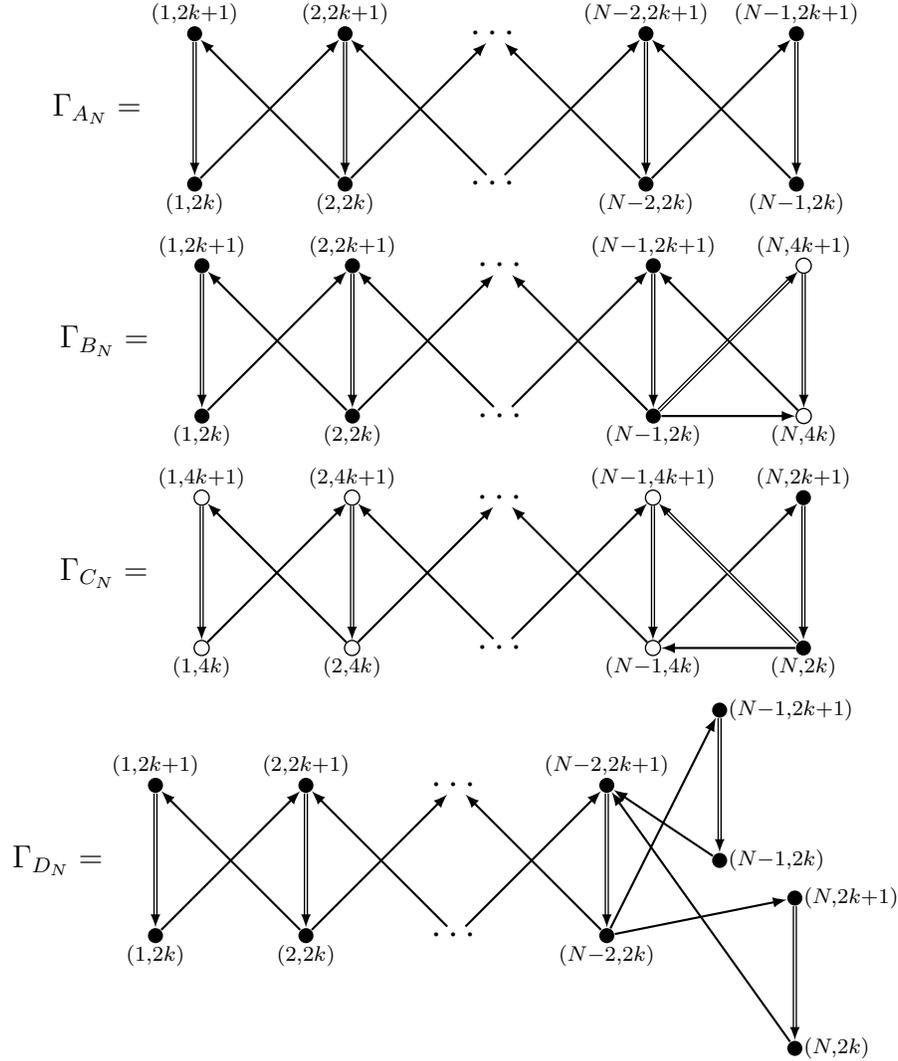
\begin{figure}\centering
\begin{tikzpicture}[scale=1,line width=.5pt, evens/.style={circle,fill=black},odds/.style={shape=circle,color=black,draw},
even/.style={circle,fill=black},odd/.style={shape=circle,draw},inner sep=0pt,minimum size=2mm]
\node[even] (12k) at (0,0) [label=below:${\scriptstyle (1,2k)}$] {};
\node[even] (22k) at (2,0) [label=below:${\scriptstyle (2,2k)}$] {};
\node[even] (12kp) at (0,2) [label=above:${\scriptstyle (1,2k+1)}$] {};
\node[even] (22kp) at (2,2) [label=above:${\scriptstyle (2,2k+1)}$] {};
\node[even] (42kp) at (6,2) [label=above:${\scriptstyle (N-2,2k+1)}$] {};
\node[even] (42k) at (6,0) [label=below:${\scriptstyle (N-2,2k)}$] {};
\node[even] (N2k) at (8,0) [label=below:${\scriptstyle (N-1,2k)}$] {};
\node[even] (N2kp) at (8,2) [label=above:${\scriptstyle (N-1,2k+1)}$] {};
\node (middleeven) at (4,0) {\large$\cdots$};
\node (middleodd) at (4,2) {\large$\cdots$};
\node (label) at (-1.3,1) { $\Gamma_{A_N} = $};
\draw [->,double] (12kp) to (12k);
\draw [->, double] (22kp) to (22k);
\draw [->, double] (42kp) to (42k);
\draw [->, double] (N2kp) to (N2k);
\draw [->,thick] (12k) to (22kp);
\draw [->,thick] (22k) to (12kp);
\draw [->,thick] (22k) to (middleodd);
\draw [->,thick] (middleeven) to (22kp);
\draw [->,thick] (42k) to (N2kp);
\draw [->,thick] (N2k) to (42kp);
\draw[->,thick](42k)to(middleodd);
\draw[->,thick](middleeven)to(42kp);
\end{tikzpicture}
\vskip.1in
\begin{tikzpicture}[scale=1,line width=.5pt, evens/.style={circle,fill=red},odds/.style={shape=circle,color=red,draw},
even/.style={circle,fill=black},odd/.style={shape=circle,draw},
inner sep=0pt,minimum size=2mm]
\node[odd] (N2k) at (8,0) [label=below:${\scriptstyle (N,4k)}$] {};
\node[even] (12k) at (0,0) [label=below:${\scriptstyle (1,2k)}$] {};
\node[even] (22k) at (2,0) [label=below:${\scriptstyle (2,2k)}$] {};
\node[even] (42k) at (6,0) [label=below:${\scriptstyle (N-1,2k)}$] {};
\node[odd] (N2kp) at (8,2) [label=above:${\scriptstyle (N,4k+1)}$] {};
\node[even] (12kp) at (0,2) [label=above:${\scriptstyle (1,2k+1)}$] {};
\node[even] (22kp) at (2,2) [label=above:${\scriptstyle (2,2k+1)}$] {};
\node[even] (42kp) at (6,2) [label=above:${\scriptstyle (N-1,2k+1)}$] {};
\node (middleeven) at (4,0) {\large$\cdots$};
\node (middleodd) at (4,2) {\large$\cdots$};
\node (label) at (-1.3,1) {$\Gamma_{B_N} = $};
\draw [->,double] (12kp) to (12k);
\draw [->, double] (22kp) to (22k);
\draw [->, double] (42kp) to (42k);
\draw [->, double] (N2kp) to (N2k);
\draw [->,thick] (12k) to (22kp);
\draw [->,thick] (22k) to (12kp);
\draw [->,thick] (22k) to (middleodd);
\draw [->,thick] (middleeven) to (22kp);
\draw [->,double] (42k) to (N2kp);
\draw [->,thick] (42k) to(N2k);
\draw [->,thick] (N2k) to (42kp);
\draw[->,thick](42k)to(middleodd);
\draw[->,thick](middleeven)to(42kp);
\end{tikzpicture}\vskip.1in

\begin{tikzpicture}[scale=1,line width=.5pt, evens/.style={circle,fill=red},odds/.style={shape=circle,color=red,draw},
even/.style={circle,fill=black},odd/.style={shape=circle,draw},
inner sep=0pt,minimum size=2mm]
\node[even] (N2k) at (8,0) [label=below:${\scriptstyle (N,2k)}$] {};
\node[odd] (12k) at (0,0) [label=below:${\scriptstyle (1,4k)}$] {};
\node[odd] (22k) at (2,0) [label=below:${\scriptstyle (2,4k)}$] {};
\node[odd] (42k) at (6,0) [label=below:${\scriptstyle (N-1,4k)}$] {};
\node[even] (N2kp) at (8,2) [label=above:${\scriptstyle (N,2k+1)}$] {};
\node[odd] (12kp) at (0,2) [label=above:${\scriptstyle (1,4k+1)}$] {};
\node[odd] (22kp) at (2,2) [label=above:${\scriptstyle (2,4k+1)}$] {};
\node[odd] (42kp) at (6,2) [label=above:${\scriptstyle (N-1,4k+1)}$] {};
\node (middleeven) at (4,0) {\large$\cdots$};
\node (middleodd) at (4,2) {\large$\cdots$};
\node (label) at (-1.3,1) {$\Gamma_{C_N} = $};
\draw [->,double] (12kp) to (12k);
\draw [->, double] (22kp) to (22k);
\draw [->, double] (42kp) to (42k);
\draw [->, double] (N2kp) to (N2k);
\draw [->,thick] (12k) to (22kp);
\draw [->,thick] (22k) to (12kp);
\draw [->,thick] (22k) to (middleodd);
\draw [->,thick] (middleeven) to (22kp);
\draw [->,thick] (42k) to (N2kp);
\draw [->,thick] (N2k) to (42k);
\draw [->,double] (N2k) to (42kp);
\draw[->,thick](42k)to(middleodd);
\draw[->,thick](middleeven)to(42kp);
\end{tikzpicture}\vskip.1in
\begin{tikzpicture}[scale=1,line width=.5pt, evens/.style={circle,fill=red},odds/.style={shape=circle,color=red,draw},
even/.style={circle,fill=black},odd/.style={shape=circle,draw},
inner sep=0pt,minimum size=2mm]
\node[even] (N2k) at (8.5,-1.5) [label=right:${\scriptstyle (N,2k)}$] {};
\node[even] (Nm2k) at (7.5,1) [label=right:${\scriptstyle (N-1,2k)}$] {};
\node[even] (12k) at (0,0) [label=below:${\scriptstyle (1,2k)}$] {};
\node[even] (22k) at (2,0) [label=below:${\scriptstyle (2,2k)}$] {};
\node[even] (42k) at (6,0) [label=below:${\scriptstyle (N-2,2k)}$] {};
\node[even] (Nm2kp) at (7.5,3) [label=right:${\scriptstyle (N-1,2k+1)}$] {};
\node[even] (N2kp) at (8.5,.5) [label=right:${\scriptstyle (N,2k+1)}$] {};
\node[even] (12kp) at (0,2) [label=above:${\scriptstyle (1,2k+1)}$] {};
\node[even] (22kp) at (2,2) [label=above:${\scriptstyle (2,2k+1)}$] {};
\node[even] (42kp) at (6,2) [label=above:${\scriptstyle (N-2,2k+1)}$] {};
\node (middleeven) at (4,0) {\large$\cdots$};
\node (middleodd) at (4,2) {\large$\cdots$};
\node (label) at (-1.3,1) {$\Gamma_{D_N} = $};
\draw[->,thick](42k)to(Nm2kp);
\draw[->,thick](Nm2k)to(42kp);
\draw[->,double](Nm2kp)to(Nm2k);
\draw [->,double] (12kp) to (12k);
\draw [->, double] (22kp) to (22k);
\draw [->, double] (42kp) to (42k);
\draw [->, double] (N2kp) to (N2k);
\draw [->,thick] (12k) to (22kp);
\draw [->,thick] (22k) to (12kp);
\draw [->,thick] (22k) to (middleodd);
\draw [->,thick] (middleeven) to (22kp);
\draw [->,thick] (42k) to (N2kp);
\draw [->,thick] (N2k) to (42kp);
\draw[->,thick](42k)to(middleodd);
\draw[->,thick](middleeven)to(42kp);
\end{tikzpicture}

\caption{\small The quivers for the $A_{N-1},D_N,B_N,C_N$ Q-system cluster algebras. 
We have indicated a generic Q-system cluster
along the bipartite belt: each vertex labelled $(a,k)$ corresponds to a cluster variable $Q_{a,k}$. Nodes corresponding to short roots are denoted by empty circles.}
\label{fig:CNquiv}
\end{figure}

Each set of the Q-systems is associated with a cluster algebra:
\begin{thm}\cite{Ke07,DFKQcluster}
For each algebra $\g$, the variables $\{Q_{a,k}: a\in[1,r], k\in \Z\}$, up to a simple rescaling which eliminates the minus sign on the right hand side, are cluster variables in a corresponding cluster algebra. Each of the Q-system relations is an exchange relation in the cluster algebra.
\end{thm}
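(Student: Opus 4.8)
The plan is to exhibit, for each $\g\in\{A_{N-1},B_N,C_N,D_N\}$, an explicit coefficient-free (skew-symmetric) cluster algebra whose exchange quiver is the one drawn in Figure \ref{fig:CNquiv}, and then to check that the seed mutations of this cluster algebra reproduce, after the rescaling mentioned in the statement, precisely the recursions \eqref{ANqsys}--\eqref{DNqsys}. First I would fix the initial seed: the quiver $\Gamma_\g$ is organized along a ``bipartite belt'' consisting of two consecutive time-slices (nodes $(a,k)$ with $k$ even and $k$ odd), and I take the initial cluster to be the variables $Q_{a,k}$ on a fundamental domain of this belt, identifying the vertex $(a,k)$ with the cluster variable $Q_{a,k}$. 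Bipartiteness — every node is a source or a sink — is exactly what will allow the mutations to be organized into alternating ``sweeps''.

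The second step is to absorb the sign. An exchange relation in a skew-symmetric cluster algebra has the form $Q_v\,Q'_v=\prod_{w\to v}Q_w+\prod_{v\to w}Q_w$, with a plus sign, whereas each line of \eqref{ANqsys}--\eqref{DNqsys} carries a minus. So I would introduce rescaled variables $Q_{a,k}=\eta_{a,k}\,\widehat Q_{a,k}$, with $\eta_{a,k}$ a sign/monomial prefactor chosen so that substitution turns every minus into a plus. Solving the resulting recursion for the $\eta_{a,k}$, consistently with the boundary conditions $Q_{0,k}=0$ and (in type $A$) $Q_{N,k}=1$, is routine bookkeeping but must be carried out in each type to confirm that a globally consistent choice exists.

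The third step is the matching of mutations to relations. For each node $(a,k)$ I read off its incoming and outgoing arrows from $\Gamma_\g$: in the simply-laced interior these give $Q_{a,k-1}Q_{a,k+1}=Q_{a,k}^2-Q_{a+1,k}Q_{a-1,k}$, and, after the rescaling of step two, the Fomin--Zelevinsky mutation at $(a,k)$ should reproduce the corresponding line of the Q-system verbatim. The double arrows at the short-root (empty-circle) nodes are what generate the squared factors and the doubled/quadrupled time indices such as $Q_{N,2k}$ and $Q_{N,4k}$, so the non-generic relations near the end of the Dynkin diagram must be matched case by case. The final structural step is periodicity: I would show that the compound mutation obtained by mutating at all sources and then at all sinks (sweeping one slice of the belt onto the next) maps $\Gamma_\g$ to itself and advances the time index $k\mapsto k+1$. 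This guarantees that iterating the sweep generates all $Q_{a,k}$, $k\in\Z$, as cluster variables, and that every Q-system relation appears as a single exchange along the way.

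The hard part will be this quiver-periodicity together with the short-root matching. In types $BCD$ the belt is not uniformly bipartite in the naive sense: the doubled and quadrupled time indices at the spin and short-root nodes mean that the effective ``time advance'' under one sweep is node-dependent, and one must verify that the mutated quiver is genuinely \emph{isomorphic} to $\Gamma_\g$ — not merely combinatorially similar — so that the recursion truly closes on itself. Establishing this isomorphism, and confirming that the atypical relations at the tail of the diagram emerge correctly from the double arrows after the sign rescaling, is where essentially all of the work lies.
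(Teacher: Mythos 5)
Your overall strategy is the same one used in the sources this theorem is quoted from \cite{Ke07,DFKQcluster}, and which the paper summarizes right after the statement: take the initial seed $\bX=(Q_{a,0};Q_{a,1})_{a=1}^r$ with exchange matrix \eqref{Bmatrix} built from the Cartan matrix, rescale to absorb the sign (this is indeed routine: a prefactor $\epsilon_{a,k}$ solving $\epsilon_{a,k+1}\epsilon_{a,k-1}\epsilon_{a+1,k}\epsilon_{a-1,k}=-\epsilon_{a,k}^2$ exists globally and is compatible with the boundary conditions), and identify each Q-system relation with the exchange relation at the corresponding node of the quiver in Figure \ref{fig:CNquiv}.

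There is, however, one step in your outline that is wrong as literally stated, and it is exactly the step you defer to at the end. The compound mutation ``mutate at all sources, then at all sinks'' does \emph{not} return $\Gamma_\g$ to itself with a uniform shift $k\mapsto k+1$ in types $B_N$ and $C_N$: the short-root node advances at twice the rate of the long-root chain (this is visible both in the quiver labels $(N,4k),(N,4k+1)$ versus $(a,2k),(a,2k+1)$ for $B_N$, and in the relations \eqref{BNqsys}, where $Q_{N,2k},Q_{N,2k\pm1},Q_{N,2k+2}$ all occur against $Q_{N-1,k},Q_{N-1,k\pm1}$). Consequently the short-root node must be mutated twice, interleaved appropriately with the single sweep of the long-root nodes, before the quiver recurs; the precise mutation schedule on this \emph{generalized} bipartite belt, and the verification that each of its steps is one of the listed Q-system relations, is the content of \cite{DFKQcluster}, Theorem 3.6, and is not something your source/sink dichotomy produces. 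A second point worth recording: the cluster algebra so defined has many cluster variables besides the $Q_{a,k}$ (obtained by mutating off the belt), so the theorem asserts only that the Q-system relations occur among the exchange relations along this specific belt, not that the belt exhausts the cluster structure. Your proposal correctly isolates where the work lies, but the mechanism you propose for closing the recursion would have to be replaced by the generalized belt before the argument goes through in types $B$ and $C$.
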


The cluster algebras are defined via a $2r\times 2r$ skew symmetric exchange matrix $B$,
($r$ being the rank of $\g$), or quiver $\Gamma$ as in Figure \ref{fig:CNquiv}, which depends only on the Cartan matrix $C$ of $\g$:
\begin{equation}\label{Bmatrix}
B = \left(\begin{array}{c | c } C^T - C & -C^T \\ \hline C & 0\end{array}
\right). 
\end{equation}
This exchange matrix, together with the initial cluster variables $\bX = (Q_{a,0}; Q_{a,1})_{a=1}^r$, defines the cluster algebra. The cluster variables $\{Q_{a,k}\}$ are obtained from a generalized bipartite evolution of the initial cluster $(\bX,B)$ \cite{DFKQcluster}. 
The subset of mutations on the (generalized) bipartite belt which generates all the cluster variables corresponding to the Q-system algebra was given in \cite{DFKQcluster}, Theorem 3.6.

\subsection{Quantum Q-systems}

One of the advantages of formulating the Q-system relations in terms of cluster algebra mutations is that there is a canonical quantization of the cluster algebra, using the canonical Poisson structure \cite{GSV} and its quantization \cite{BZ}.

The quantum cluster algebra attached to a non-degenerate, skew-symmetric matrix $B$ is the non-commutative algebra generated by the cluster variables $\bX=(X_i)$ at an initial cluster and their inverses, with exchange matrix $B$, as well as
the cluster variables at all mutation equivalent clusters.
Within the cluster $(\mathbf X, B)$ the cluster variables $q$-commute according to a skew-symmetric $\Lambda$ proportional to the inverse of $B$:
$$ X_i X_j = q^{\Lambda_{ij}} X_j X_i.$$ 
The exchange relations are given by normal-ordering of the classical mutations:
$$
X_i' = : \prod_{j: B_{ij}>0} X_j^{B_{ij}} X_i^{-1}: + :\prod_{j: B_{ij}<0} X_j^{-B_{ij}} X_i^{-1}: .
$$
Here, given a monomial $\prod X_i^{b_i}$ such that $X_iX_j =q^{a_{i,j}}X_jX_i$, the normal ordered product is
$$:X_1^{b_1} \cdots X_\ell^{b_\ell} :\, = q^{-\frac{1}{2}\sum_{i<j} a_{i,j} b_i b_j}\, X_1^{b_1} \cdots X_\ell^{b_\ell}.$$

The exchange matrices \eqref{Bmatrix} corresponding to the Q-system cluster algebras are skew-symmetric and invertible. We use the associated quantum cluster algebra to define 
the quantum Q-systems as the quantized exchange relations corresponding to the exchange relations appearing in the classical Q-systems.

Let $\Lambda$ be a $2r\times 2r$ skew-symmetric matrix,
with $\Lambda^T B = -D$, a diagonal integer matrix with negative integer entries. Then
\begin{equation}\label{LambdaMatrix}
\Lambda = \left(\begin{array}{c | c } 0 & \lambda \\ \hline -\lambda^T & \lambda^T- \lambda \end{array}
\right),
\end{equation}
where $\lambda$ is proportional to the inverse Cartan matrix.
We use the following normalizations:
\begin{equation}\label{lambda}
\begin{array}{rl}
A_{N-1}: & \lambda_{ab}=C^{-1}_{ab}
=  {\min}(a,b) - \frac{ab}{N} \\ \\
B_N: & \lambda_{ab}= 2 C^{-1}_{ab} =
\left\{ \begin{array}{ll} 2 {\min}(a,b), & 1\leq a\leq N,\ 1\leq b\leq N-1;\\ 
{\min}(a,b), & 1\leq a\leq N=b ,
\end{array}
\right.  
\\ \\
C_N: & \lambda_{ab}= 2 C^{-1}_{ab} =
\left\{ \begin{array}{ll} 2 {\min}(a,b), & 1\leq a\leq N-1,\ 1\leq b\leq N;\\
{\min}(a,b), & 1\leq b\leq N=a,
\end{array}
\right.
\\ \\
D_N: & \lambda_{ab}=2 C^{-1}_{ab} 
=\left\{ \begin{array}{ll}
 2 {\min}(a,b), &  1\leq a,b\leq N-2,\\
b, &  1\leq b\leq N-2,\  a\in \{N-1,N\},\\
a, &  1\leq a\leq N-2,\  b \in \{N-1,N\},\\
\half N, & a=b \in \{N-1,N\},\\
\half (N-2), & a\neq b \in \{N-1,N\}.
\end{array} 
\right.
\end{array}
\end{equation}
This choice of normalization results, in the cases $B_N,C_N,D_N$, in the  value 
$\lambda_{a,b}=2\, {\rm min}(a,b)$ whenever $a,b$ correspond
to the part of the Dynkin diagram that forms an A-type chain, that is, $a,b \in [1,N-1]$ for types $BC$ and $a,b\in [1,N-2]$ for type $D$.

The quantum Q-systems are recursion relations for the non-commuting variables $\Q_{a,k}$. These take the form
\begin{eqnarray}\label{ANqqsys}\!\!\!\!\!\!\!\!\!\!\!\!\!\!\!\!\!\!\!\!\!\!\!\!\!\!\!\!\!\!\!\!\!\!\!\!\!\!\!\!A_{N-1}&:&
\Q_{a,k}\, \Q_{b,p}=q^{\lambda_{a,b}(p-k)} \, \Q_{b,p}\,\Q_{a,k} ,\nonumber \\
&&q^{\lambda_{a,a}} \, \Q_{a,k+1}\,\Q_{a,k-1}= \Q_{a,k}^2-q^{\frac{1}{2}} \,\Q_{a+1,k}\,\Q_{a-1,k},\quad (a\in [1,N]) ,
 \\
\nonumber&&\Q_{0,k}=1, \qquad \Q_{N+1,k}=0.
\end{eqnarray}
\begin{eqnarray}\quad\quad\,  B_N&:&
\Q_{a,k}\, \Q_{b,p}= q^{p \lambda_{a,b}-k \lambda_{b,a}}\, \Q_{b,p}\, \Q_{a,k},\nonumber \\
&&q^{2a} \,\Q_{a,k+1}\,\Q_{a,k-1}=\Q_{a,k}^2-q\, \Q_{a+1,k}\,\Q_{a-1,k}\, \Q_{\al-1,k},\quad (\al\in [1,N-2]),\nonumber \\
&&q^{2N-2} \,\Q_{N-1,k+1}\,\Q_{N-1,k-1}=(\Q_{N-1,k})^2-q \Q_{N,2k}\, \Q_{N-2,k}, \label{BNqqsys}\\
&&q^N\, \Q_{N,2k+1}\,\Q_{N,2k-1}= (\Q_{N,2k})^2 -q (\Q_{N-1,k})^2,\nonumber \\
&&q^N\, \Q_{N,2k+2}\,\Q_{N,2k}= (\Q_{N,2k+1})^2 -q^{N} \Q_{N-1,k+1}\,\Q_{N-1,n}, \nonumber \\
&&\Q_{0,k}=1. \nonumber
\end{eqnarray}

\begin{eqnarray}\quad\quad C_N&:&
\Q_{a,k}\, \Q_{b,p}=
q^{p \lambda_{a,b} -k \lambda_{b,a}}\, 
\Q_{b,p}\, \Q_{a,k},\nonumber \\
&&q^{2a} \Q_{a,k+1}\,\Q_{a,k-1}=\Q_{a,k}^2-q \,\Q_{a+1,k}\,\Q_{a-1,k} ,\qquad (\al\in[1,N-2]),\nonumber\\
&&q^{2N-2} \Q_{N-1,2k+1}\Q_{N-1,2k-1}=\Q_{N-1,2k}^2 -q  \Q_{N-2,2k} \Q_{N,k}^2, \label{CNqqsys}\\
&&q^{2N-2} \Q_{N-1,2k+2}\Q_{N-1,2k}=\Q_{N-1,2k+1}^2 - q^{1+\frac{N}{2}}\,\Q_{N-2,2k+1} \Q_{N,k+1}\Q_{N,k}, \nonumber\\
&&q^N \Q_{N,k+1}\,\Q_{N,k-1}= \Q_{N,k}^2- q \Q_{N-1,2k},\nonumber \\
&&\Q_{0,k}=1. \nonumber
\end{eqnarray}
\begin{eqnarray}
D_N&:&
\Q_{a,k}\, \Q_{b,p}=q^{\lambda_{a,b}(p-k)}\, \Q_{b,p}\, \Q_{a,k},\nonumber \\
&&q^{2a}\Q_{a,k+1}\,\Q_{a,k-1}=\Q_{a,k}^2-q\, \Q_{a+1,k}\,\Q_{a-1,k}\qquad (a\in [1,N-3]),\nonumber \\
&&q^{2(N-2)}\Q_{N-2,k+1}\,\Q_{N-2,k-1}= \Q_{N-2,k}^2-q\,  \Q_{N,k}\,\Q_{N-1,k}\,\Q_{N-3,k} ,\\
&&q^{\frac{N}{2}}\Q_{N-1,k+1}\,\Q_{N-1,k-1}= \Q_{N-1,k}^2-q\,  \Q_{N-2,k},\nonumber \\
&&q^{\frac{N}{2}}\Q_{N,k+1}\,\Q_{N,k-1}= \Q_{N,k}^2-q\,  \Q_{N-2,k} ,\nonumber \\
&&\Q_{0,k}=1. \nonumber\label{DNqqsys}
\end{eqnarray}

In each case, the $q$-commutation relation, i.e. the first equation in each set, holds only for variables within the same cluster, hence the restriction on possible values of the second index. For example, in the case $A_{N-1}$, the restriction is $|p-k|\leq |a-b|+1$. In general, the clusters consisting of Q-system solutions only are parameterized by generalized Motzkin paths \cite{Simon}.

\begin{remark}\label{boundary_condition}
The  general boundary condition in \eqref{ANqqsys} the quantum Q-system of type $A_{N-1}$ is $\Q_{N+1,k}=0$, as opposed to the condition $Q_{N,k}=1$ for the classical system \eqref{ANqsys}, which is compatible but more restrictive.
This means that there is an additional set of  variables in the center of the quantum cluster algebra which satisfy
$\Q_{N,k+1}\Q_{N,k-1}=\Q_{N,k}^2$. This is consistent with the extension of the definition \eqref{lambda} of the matrix $\lambda_{a,b}$ in type $A$ to an $N\times N$ matrix using the same formula, so that $\lambda_{a,N}=0$.
The most general solution subject to the boundary conditions has $\Q_{N,k}=\Q_{N,1}^k \Q_{N,0}^{k-1}$, in terms of the two central elements $\Q_{N,1}$ and $\Q_{N,0}$. This more general system can be embedded into a cluster algebra with coefficients.
\end{remark}

\section{The $A_{N-1}$ case solution: generalized Macdonald difference 
operators and quantum determinants}\label{sec:A}

\subsection{Renormalized quantum Q-system}

With the choice of boundary condition for the type $A$ quantum Q-system as in Remark \ref{boundary_condition}, the quantum Q-system is homogeneous with respect to the grading $\deg(\Q_{a,k}):=a k$ ($a\in[1,N]$). 
We adjoin an invertible degree operator $\Delta^{1/N}$ to the algebra, such that
$$\Delta\, \Q_{a,k}=q^{ak}\, \Q_{a,k}\,\Delta, \qquad a\in [1,N],k\in \Z .$$
Using $\lambda_{a,a}+\frac{a}{N}=a$ and $\lambda_{a+1,a+1}+\lambda_{a-1,a-1}-2\lambda_{a,a}=-\frac{2}{N}$,
the renormalized variables
\begin{equation}\label{renorm} \widetilde\Q_{a,k}=q^{-\frac{1}{2}(k+\frac{N}{2})\lambda_{a,a}}\, \Q_{a,k} \, \Delta^\frac{a}{N}  ,\qquad a\in [1,N], k\in \Z ,
\end{equation}
satisfy the
{\it renormalized quantum Q-system}:
\begin{eqnarray}
&&\widetilde\Q_{a,k}\,\widetilde\Q_{b,k'}= q^{(k'-k){\min}(a,b)} \, \widetilde\Q_{b,k'}\,\widetilde\Q_{a,k},\quad |k-k'| \leq |a-b|+1,\nonumber \\
&&q^a\, \widetilde\Q_{a,k+1}\, \widetilde\Q_{a,k-1} =\widetilde\Q_{a,k}^2- \widetilde\Q_{a+1,k}\, \widetilde\Q_{a-1,k} ,\qquad (a\in [1,N]),\label{Msys}\\
&&\widetilde\Q_{0,k}=1,\qquad \widetilde\Q_{N+1,k}=0.\nonumber
\end{eqnarray}

The choice $\Q_{N,0}=1$ implies $\widetilde\Q_{N,0}=\Delta$.
Defining $A=\widetilde\Q_{N,1}\widetilde\Q_{N,0}^{-1}$, a homogeneous element of degree N, we have $\widetilde\Q_{N,k}=A^k\, \Delta$. The quiver with coefficients corresponding to this cluster algebra is illustrated in Figure \ref{qAquiver}.

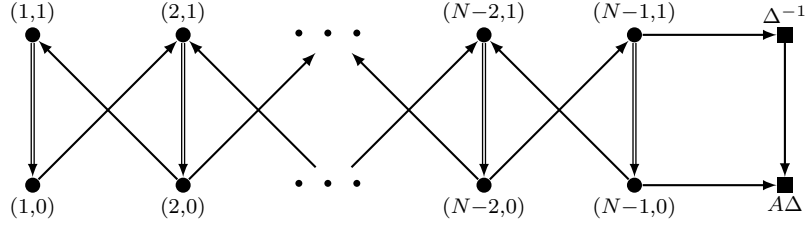
\begin{figure}
\begin{tikzpicture}[scale=1,line width=.5pt, evens/.style={circle,fill=red},odds/.style={shape=circle,color=red,draw},
even/.style={circle,fill=black},odd/.style={shape=circle,draw},
coeff/.style={shape=rectangle,fill=black},
inner sep=1pt,minimum size=2mm]
\node[even] (N2k) at (8,0)  [label=below:${\scriptstyle (N-1,0)}$]{};
\node[even] (12k) at (0,0)  [label=below:${\scriptstyle (1,0)}$] {};
\node[even] (22k) at (2,0)  [label=below:${\scriptstyle (2,0)}$]{};
\node[even] (42k) at (6,0)  [label=below:${\scriptstyle (N-2,0)}$]{};
\node[even] (N2kp) at (8,2) [label=above:${\scriptstyle (N-1,1)}$] {};
\node[even] (12kp) at (0,2)  [label=above:${\scriptstyle (1,1)}$] {};
\node[even] (22kp) at (2,2)  [label=above:${\scriptstyle (2,1)}$] {};
\node[even] (42kp) at (6,2)  [label=above:${\scriptstyle (N-2,1)}$] {};
\node (middleeven) at (4,0) {\Huge$\cdots$};
\node (middleodd) at (4,2) {\Huge$\cdots$};
\node[coeff] (A) at (10,2) [label=above:${\scriptstyle \Delta^{-1}}$] {};
\node[coeff] (D) at (10,0) [label=below:${\scriptstyle A\Delta}$] {};
\draw [->, double] (12kp) to (12k);
\draw [->,double] (22kp) to (22k);
\draw [->,double] (42kp) to (42k);
\draw [->,double] (N2kp) to (N2k);
\draw [->,thick] (12k) to (22kp);
\draw [->,thick] (22k) to (12kp);
\draw [->,thick] (22k) to (middleodd);
\draw [->,thick] (middleeven) to (22kp);
\draw [->,thick] (42k) to (N2kp);
\draw [->,thick] (N2k) to (42kp);
\draw[->,thick](42k)to(middleodd);
\draw[->,thick](middleeven)to(42kp);
\draw[->,thick] (N2k) to (D);
\draw[->,thick] (N2kp) to (A);
\draw[->,thick] (A) to (D);
\end{tikzpicture}
\caption{The type $A$ quantum Q-system quiver corresponding to the initial seed $\{\Q_{a,0},\Q_{a,1}\}$. Square nodes denote coefficients.}\label{qAquiver}
\end{figure}

\subsection{The quantum determinant}
The exchange relations 
\eqref{Msys} define a quantum determinant: The variables $\widetilde\Q_{a,k}$ with $a>1$ are polynomials in
the variables $\{\widetilde\Q_{1,k'}: |k'-k| \leq a-1\}$. Below, we use the notation $\widetilde\Q_{k}:=\widetilde\Q_{1,k}$.
The quantum determinant is best defined in terms of generating functions.
\begin{defn} Given a set of integers $k_1,...,k_a\in \Z$, define the Hankel matrix 
\begin{equation}\label{hank}
(\widetilde\Q_{k_i+i-j})_{1\leq i,j\leq a} .
\end{equation}
The quantum determinant of this matrix,  
denoted by $\widetilde\Q[k_1,k_2,...,k_a]$,
is given by the coefficients of the generating function:
$$\sum_{k_1,...,k_\al\in \Z} u_1^{k_1}\cdots u_a^{k_\al}\, {\widetilde\Q}[k_1,...,k_a]=
\prod_{1\leq i<j \leq a} \left(1-q\frac{u_j}{u_i}\right) { \widetilde\Q}(u_1) {\widetilde\Q}(u_2)\cdots  {\widetilde\Q}(u_\al),$$
where 
$${\widetilde\Q}(u):= \sum_{k\in \Z} u^k\, \widetilde\Q_{k}.$$
\end{defn}
The quantum determinant $\widetilde\Q[k_1,...,k_a]$ is a homogeneous polynomial of degree $a$ in the $\widetilde\Q_{k}$s. 

\begin{thm}{\cite{DFK16}}
The solutions $\widetilde\Q_{a,k}$ of the system \eqref{Msys} with $a\geq 1$ and $k\in \Z$ are the quantum determinants
$$\widetilde\Q_{a,k}= 
\widetilde\Q[\ \underbrace{ k,k,...,k }_{\text{ $a$ times}}\ ] .$$
\end{thm}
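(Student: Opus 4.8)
The plan is to prove the formula by induction on $a$, by checking that the quantum Hankel determinants satisfy the defining recursion \eqref{Msys}. First I note that \eqref{Msys} has a unique solution once the bottom row $\widetilde\Q_{1,k}=\widetilde\Q_k$ and the convention $\widetilde\Q_{0,k}=1$ are fixed: solving the exchange relation for the top variable gives
\[
\widetilde\Q_{a+1,k}=\bigl(\widetilde\Q_{a,k}^{2}-q^{a}\,\widetilde\Q_{a,k+1}\,\widetilde\Q_{a,k-1}\bigr)\,\widetilde\Q_{a-1,k}^{-1},
\]
so every $\widetilde\Q_{a,k}$ with $a\ge 2$ is a fixed noncommutative Laurent polynomial in the currents $\widetilde\Q_{k'}$. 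It therefore suffices to verify that the determinants $\widetilde\Q[k^{a}]$ (abbreviating the bracket with $a$ equal entries $k$) reproduce the base cases $\widetilde\Q[\,]=1$ and $\widetilde\Q[k]=\widetilde\Q_k$, obey the $q$-commutation relations, and satisfy the exchange relation.

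The commutation relations are the easier half: they are inherited from those of the generating current $\widetilde\Q(u)=\sum_k u^k\widetilde\Q_k$. Propagating the intra-cluster relation $\widetilde\Q_k\widetilde\Q_{k'}=q^{\mathrm{sgn}(k'-k)}\widetilde\Q_{k'}\widetilde\Q_k$ through the defining product $\prod_{i<j}(1-q\,u_j/u_i)\,\widetilde\Q(u_1)\cdots\widetilde\Q(u_a)$ and extracting the appropriate coefficients yields $\widetilde\Q_{a,k}\widetilde\Q_{b,k'}=q^{(k'-k)\min(a,b)}\widetilde\Q_{b,k'}\widetilde\Q_{a,k}$ in the window $|k-k'|\le|a-b|+1$, which is precisely the range in which all the current modes occurring lie in a common cluster. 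The normal-ordering prefactor $\prod_{i<j}(1-q\,u_j/u_i)$ is exactly the antisymmetrizing kernel that makes this bookkeeping produce the factor $\min(a,b)$ in the exponent.

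The exchange relation is the crux, and amounts to a quantum Desnanot--Jacobi (Dodgson condensation) identity for the quantum Hankel determinants, namely
\[
\widetilde\Q[k^{a+1}]\,\widetilde\Q[k^{a-1}]=\widetilde\Q[k^{a}]^{2}-q^{a}\,\widetilde\Q[(k+1)^{a}]\,\widetilde\Q[(k-1)^{a}],
\]
which is literally \eqref{Msys}. In the commutative limit this is the ordinary Desnanot--Jacobi identity applied to the $(a+1)\times(a+1)$ Hankel matrix $(\widetilde\Q_{k+i-j})$: its four bordered minors are again constant-argument Hankel determinants, at $k$, $k$, $k+1$ and $k-1$, owing to the Hankel shift structure. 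To upgrade this to the noncommutative setting I would work at the level of generating functions, writing $\widetilde\Q[k^{a+1}]\widetilde\Q[k^{a-1}]$ and $\widetilde\Q[k^{a}]^{2}$ as coefficients of products of currents, normal-ordering with the exchange relation of the previous step, and isolating the residual two-term combination.

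I expect this last step to be the main obstacle: the task is to show that the scalar $q$-powers emitted when reordering the currents (and when setting spectral parameters equal across the two determinants) collapse to the single prefactor $q^{a}$ required by \eqref{Msys}, with no leftover spectral dependence, and that the surviving contributions organize themselves as the constant-argument determinants $\widetilde\Q[(k\pm1)^{a}]$ rather than as genuinely mixed-argument ones. Once the identity is established, the induction on $a$ closes; the top boundary condition $\widetilde\Q_{N+1,k}=0$ is matched separately, as it corresponds to the vanishing of the $(N+1)\times(N+1)$ quantum Hankel determinant, i.e. a degree-$N$ functional relation on the current $\widetilde\Q(u)$ consistent with the system.
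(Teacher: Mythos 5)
First, a point of reference: this paper does not prove the theorem at all — it is imported verbatim from \cite{DFK16} — so there is no internal proof to compare against. Your architecture (uniqueness of the solution of \eqref{Msys} given the $a=1$ data $\widetilde\Q_{1,k}$, then verification that the constant-argument quantum Hankel determinants reproduce the base cases, $q$-commute correctly, and satisfy the exchange relation via a quantum Desnanot--Jacobi identity) is the right one, and is essentially the route of the cited reference. The uniqueness step is fine, granting invertibility of the cluster variables.

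The genuine gap is that the one step you defer is the entire mathematical content of the theorem. Writing $\widetilde\Q[k^{a}]$ for the determinant with $a$ equal entries, the identity
\[
q^{a}\,\widetilde\Q[(k+1)^{a}]\,\widetilde\Q[(k-1)^{a}]=\widetilde\Q[k^{a}]^{2}-\widetilde\Q[k^{a+1}]\,\widetilde\Q[k^{a-1}]
\]
is exactly what has to be proved, and you state it and describe how you \emph{would} prove it without doing so. Moreover, the mechanism you propose for both this and the commutation relations --- ``propagating the intra-cluster relation $\widetilde\Q_{k}\widetilde\Q_{k'}=q^{\sgn(k'-k)}\widetilde\Q_{k'}\widetilde\Q_{k}$ through the defining product of currents'' --- is not available in the generality you need. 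That relation is the first line of \eqref{Msys} with $a=b=1$ and is asserted only for $|k-k'|\le 1$, i.e.\ within a common cluster. An $a\times a$ constant Hankel determinant involves modes $\widetilde\Q_{k+i-j}$ with $|i-j|\le a-1$, so products of modes at index distance up to $2(a-1)$ occur; for those pairs the $q$-commutation is not an input of the system but must itself be derived from the exchange relations (distant $\widetilde\Q_{1,\cdot}$'s do \emph{not} simply $q$-commute). Consequently the ``bookkeeping'' that is supposed to emit the clean prefactors $q^{(k'-k)\min(a,b)}$ and $q^{a}$ cannot be carried out by naive reordering of current modes; controlling exactly this is the technical heart of the argument in \cite{DFK16}, and it is the part missing from your write-up. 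Until that normal-ordering computation is actually performed, the proposal is a correct plan rather than a proof.
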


\subsection{The functional representation of the quantum Q-system}
We recall the functional representation of the renormalized quantum Q-system \eqref{Msys} $\rho (\widetilde\Q_{a,k})=M_{a,k}$ \cite{DFK15}, which act
on the space of symmetric functions of $N$ variables $x_1,x_2,...,x_N$.
\begin{thm}\cite{DFK15,DFK16}\label{macdoQsys}
Let $\Gamma_i$ be the $q$-shift operator acting on the space of functions in $N$ variables, defined by
$\Gamma_i f(x_1,...,x_i,...,x_N) = f(x_1,...,q x_i, ..., x_N).$ The $q$-difference operators
\begin{equation}\label{macMo} 
M_{a,k}=\sum_{I\subset [1,N]\atop |I|=a} \Big(\prod_{i\in I}x_i\Big)^k
\prod_{i\in I\atop j\not\in I} \frac{x_i}{x_i-x_j}\, \prod_{i\in I} \Gamma_i ,\quad a\in[1,N], k\in\Z
\end{equation}
satisfy the quantum Q-system relations \eqref{Msys}.
\end{thm}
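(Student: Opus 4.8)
\emph{Setup and boundary cases.} The plan is to verify the three groups of relations in \eqref{Msys} separately, after first recording how two of these operators compose. The boundary cases are immediate: for $a=0$ only $I=\emptyset$ contributes, every product over $i\in I$ is empty and $\prod_{i\in\emptyset}\Gamma_i=1$, so $M_{0,k}=1$; for $a=N+1$ there is no $I\subset[1,N]$ of that cardinality and the sum is empty, so $M_{N+1,k}=0$. For the substance I would first move all shift operators to the right. Writing $\Gamma_I=\prod_{i\in I}\Gamma_i$, $p_I=\prod_{i\in I}x_i$, $c_I(\bx)=\prod_{i\in I,\,j\notin I}\frac{x_i}{x_i-x_j}$, and using $\Gamma_I f(\bx)=f(\bx^{(I)})\Gamma_I$ where $\bx^{(I)}$ has $x_i\mapsto qx_i$ for $i\in I$, one obtains the closed form
\[
M_{a,k}M_{b,k'}=\sum_{|I|=a,\ |J|=b} q^{\,k'|I\cap J|}\,p_I^{\,k}p_J^{\,k'}\,c_I(\bx)\,c_J(\bx^{(I)})\,\Gamma_I\Gamma_J .
\]
The only nontrivial content is the rational coefficient $c_I(\bx)\,c_J(\bx^{(I)})$, so every relation below becomes an identity of rational functions indexed by the pair $(I,J)$.

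\emph{Reduction to $k=0$ by Gaussian conjugation.} Let $\gamma$ be the Gaussian of \eqref{gaussian}, normalized so that $\gamma^{-1}\Gamma_i\gamma=q^{1/2}x_i\Gamma_i$ (this is the discrete-time automorphism referred to in the introduction). Then $\gamma^{-1}M_{a,0}\gamma=q^{a/2}M_{a,1}$, and iterating gives $M_{a,k}=q^{-ka/2}\,\gamma^{-k}M_{a,0}\,\gamma^{k}$ for all $k$. Since conjugation by $\gamma^{k}$ is an algebra automorphism, it sends every monomial occurring in \eqref{Msys} to a \emph{uniform} scalar multiple of its $k=0$ value: the exchange relation at level $k$ is equivalent (common factor $q^{-ka}$) to $q^{a}M_{a,1}M_{a,-1}=M_{a,0}^2-M_{a+1,0}M_{a-1,0}$, and the $q$-commutation $M_{a,k}M_{b,k'}=q^{(k'-k)\min(a,b)}M_{b,k'}M_{a,k}$ is equivalent (common factor $q^{-k(a+b)/2}$) to the anchored relation $M_{a,0}M_{b,m}=q^{\,m\min(a,b)}M_{b,m}M_{a,0}$ with $m=k'-k$. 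Thus it suffices to establish these two anchored families.

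\emph{The $q$-commutation.} At $m=0$ I would identify $M_{a,0}=\lim_{t\to\infty}t^{-a(N-a)}D_a$, where $D_a$ is Macdonald's $a$-th $q,t$-difference operator: since $\tfrac{tx_i-x_j}{x_i-x_j}\sim t\,\tfrac{x_i}{x_i-x_j}$ as $t\to\infty$, the leading coefficient of $D_a$ is exactly $c_I(\bx)$. Because the $D_a$ commute for all $t$, so do their limits, which settles $M_{a,0}M_{b,0}=M_{b,0}M_{a,0}$ at once. For $m\ne0$ (with $|m|\le|a-b|+1$) I would insert the composition formula into both sides, collect the terms producing a fixed total shift $\Gamma_I\Gamma_J$, and check that the paired rational coefficients differ by precisely $q^{m\min(a,b)}$; the bound $|m|\le|a-b|+1$ is exactly what prevents stray boundary configurations from spoiling the matching, and each pairing reduces to a partial-fraction identity in the variables indexed by $I\,\triangle\,J$.

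\emph{The exchange relation and the main obstacle.} For the anchored bilinear relation the cleanest route exploits Section~\ref{sec:A}. Writing $M_k:=M_{1,k}$, I would first verify the single elementary $q$-commutation $M_0M_1=qM_1M_0$, which is the $a=b=1$, $m=1$ instance above and reduces to a two-variable partial-fraction identity over pairs $\{i,j\}$. Granting this, the family $\{M_k\}$ satisfies the hypotheses of the quantum determinant theorem of \cite{DFK16}, so the Hankel quantum determinants $M[\,k,\dots,k\,]$ (with $a$ entries), defined by the ordered generating function of Section~\ref{sec:A}, solve the full renormalized system \eqref{Msys}. It then remains to prove the operator identity
\[
M_{a,k}=M[\,\underbrace{k,k,\dots,k}_{a}\,],
\]
namely that the explicit sum over $a$-element subsets equals the normal-ordered Hankel determinant of $M_{k-a+1},\dots,M_{k+a-1}$. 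This quantum Jacobi--Trudi/Wronskian identity is the main obstacle: it is where the coefficients $c_I(\bx)$, the weights $p_I^{k}$, and the $q$-shifts must conspire. I would prove it by expanding $\prod_{i<j}(1-qu_j/u_i)\,M(u_1)\cdots M(u_a)$ and matching residues, or by induction on $a$ via Desnanot--Jacobi condensation of the quantum determinant. Once it is in hand, \cite{DFK16} yields all relations of \eqref{Msys} at $k=0$ and the Gaussian reduction propagates them to every $k$; as an independent check, the same residue identity is the $t\to\infty$ degeneration of the classical Pieri/commutation identity among Macdonald operators, which also proves the bilinear relation directly from the composition formula.
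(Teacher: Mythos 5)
The paper itself does not prove Theorem \ref{macdoQsys}: it is imported wholesale from \cite{DFK15,DFK16}, and the surrounding material (the identification $M_{a,0}=\lim_{t\to\infty}t^{-a(N-a)}D_a$, Theorem \ref{gauM} on Gaussian conjugation, and the quantum determinant theorem) describes exactly the architecture you propose. Your boundary cases, the composition formula with $c_J(\bx^{(I)})$ and the factor $q^{k'|I\cap J|}$, the reduction by $\gamma$-conjugation of all relations to anchored ones at $k=0,\pm1$ (with the common prefactors you compute), and the proof of $[M_{a,0},M_{b,0}]=0$ from commutativity of the $D_a$ at finite $t$ are all correct and consistent with the paper.

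The genuine gap is in the step feeding the quantum determinant machinery. You verify only the single adjacent relation $M_0M_1=qM_1M_0$ and then assert that ``the family $\{M_k\}$ satisfies the hypotheses of the quantum determinant theorem of \cite{DFK16}.'' This fails for two reasons. First, the theorem as quoted in the paper runs the other way: it asserts that solutions of \eqref{Msys} are given by quantum determinants, not that determinants built from any family obeying one commutation relation solve the system. Second, and more substantively, the adjacent $q$-commutation is far too weak a hypothesis: the determinant construction rests on a quadratic exchange relation for the full current $M(u)=\sum_k u^kM_{1,k}$, of the form $(u-qv)\,M(u)M(v)+(v-qu)\,M(v)M(u)=0$, whose coefficient of $u^{k+1}v^{l}$ gives, for \emph{every} pair $k,l$, an identity $M_kM_l-qM_lM_k=qM_{k+1}M_{l-1}-M_{l-1}M_{k+1}$. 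Only the case $l=k+1$ is your relation; the non-adjacent cases are genuinely independent (indeed $\widetilde\Q_{1,k}$ and $\widetilde\Q_{1,k'}$ lie in a common cluster, and hence $q$-commute, only for $|k-k'|\le 1$), and they are precisely what makes $M_{2,k}:=M_{1,k}^2-qM_{1,k+1}M_{1,k-1}$ interact correctly with the $M_{1,k'}$ and propagates consistency up the tower in $a$. So besides the quantum Wronskian identity $M_{a,k}=M[k,\dots,k]$, which you rightly single out as the main computational obstacle, and the partial-fraction identities for the $m\neq0$ commutations, which you assert rather than prove, you must establish the full current relation for the explicit operators; a single two-variable partial-fraction check does not suffice. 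With those ingredients supplied, your outline closes and coincides with the route of the cited references.
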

In this representation,
\begin{equation}
A=x_1x_2\cdots x_N,\qquad \Delta=\Gamma_1\Gamma_2\cdots \Gamma_N .
\end{equation}
\begin{remark}\label{glvssl} For $\sl_N$-characters, the products $A$ and $\Delta$ are taken to be equal to $1$. The more general boundary condition here corresponds to $\gl_N$-characters.
\end{remark}

\subsection{The spherical double affine Hecke algebra}
The reader will have recognized that the difference operators $M_{a,0}$ in  \eqref{macMo} are the 
limit $t\to \infty$  of the (renormalized) Macdonald difference operators in type $A$. This is the main observation which led to the results of \cite{DFKqt}, where it is shown that the spherical DAHA \cite{Cheredbook} of type $A_{N-1}$
is the natural $t$-deformation of the quantum Q-system. 

An important observation in \cite{DFKqt} is that the evolution in the discrete time variable $k$ is induced by the adjoint action of one of
the generators of the $SL_2(\Z)$ symmetry of the DAHA.
\begin{thm}{\cite{DFKqt}}\label{gauM}
The discrete time evolution of the operators $M_{a,k}$ is induced by the adjoint action of the Gaussian $\gamma^{-1}$,
with $\gamma$ as in 
\eqref{gaussian}:
\begin{equation}\label{adjA}
M_{a,k}=q^{-a k/2}\, \gamma^{-k}\, M_{a,0}\, \gamma^k .
\end{equation}
\end{thm}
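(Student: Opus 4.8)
The statement is a pure operator identity on the space of symmetric functions, so the plan is to compute the conjugation $\gamma^{-k}M_{a,0}\gamma^{k}$ directly and match it, term by term, against the defining expression \eqref{macMo} for $M_{a,k}$. The only nontrivial interaction is between the Gaussian $\gamma$ and the shift operators $\Gamma_i$: every other ingredient of $M_{a,0}$ --- the rational coefficients $\prod_{i\in I,\,j\notin I}\frac{x_i}{x_i-x_j}$ and, in $M_{a,k}$, the monomial prefactors $(\prod_{i\in I}x_i)^k$ --- is a multiplication operator and hence commutes with $\gamma$, which is itself multiplication by a function of $\bx$. Thus the entire problem reduces to understanding how $\gamma$ conjugates a single shift $\Gamma_i$.

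First I would establish the single-shift intertwining relation
\[
\gamma^{-1}\,\Gamma_i\,\gamma = q^{1/2}\,x_i\,\Gamma_i,\qquad i\in[1,N].
\]
This follows from \eqref{gaussian}: applying $\Gamma_i$ replaces $\log x_i$ by $\log x_i+\log q$ in the Gaussian exponent, and the resulting increment of that exponent is exactly $\log x_i+\tfrac12\log q$, so that $\Gamma_i\gamma=q^{1/2}x_i\,\gamma\,\Gamma_i$ as operators; conjugating and using that $x_i$ commutes with $\gamma$ yields the displayed relation. Iterating --- inserting $\gamma^{k}\gamma^{-k}=1$ and using that $q^{1/2}x_i$ commutes with $\gamma$ --- gives $\gamma^{-k}\Gamma_i\gamma^{k}=q^{k/2}x_i^{k}\Gamma_i$ for all $k\in\Z$.

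Next I would feed this into $M_{a,0}$. Since the shifts $\{\Gamma_i\}_{i\in I}$ commute pairwise, conjugation distributes over the product, $\gamma^{-k}\big(\prod_{i\in I}\Gamma_i\big)\gamma^{k}=\prod_{i\in I}\big(q^{k/2}x_i^{k}\Gamma_i\big)$, and because the indices in $I$ are distinct each multiplication factor $x_i^{k}$ commutes with all $\Gamma_j$, $j\neq i$, and may be collected to the left. This produces $q^{ka/2}(\prod_{i\in I}x_i)^{k}\prod_{i\in I}\Gamma_i$. Summing over the $\binom{N}{a}$ subsets $I$ and pulling the rational coefficients (inert under $\gamma$) back out, one obtains $\gamma^{-k}M_{a,0}\gamma^{k}=q^{ka/2}M_{a,k}$, which is \eqref{adjA} after multiplication by $q^{-ak/2}$.

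The computation is routine once the single-shift relation is in hand, so the real content --- and the step I expect to require the most care --- is the derivation of that relation with the precise scalar $q^{1/2}$. This is where the normalization $\tfrac{1}{2\log q}$ and the quadratic dependence of the Gaussian exponent on $\log x_i$ are essential: it is exactly this choice that makes each of the $a$ shifts contribute one factor of $q^{1/2}$, producing the overall $q^{ak/2}$ cancelled by the prefactor $q^{-ak/2}$. A secondary point worth spelling out is the formal status of $\gamma$ (a non-rational multiplication symbol with non-integer exponent): it should be treated as a conjugating element acting on the algebra of $q$-difference operators rather than as an operator on polynomials, and once one checks that conjugation by $\gamma$ is an algebra automorphism fixing all multiplication operators, every manipulation above is justified.
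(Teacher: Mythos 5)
Your proof is correct, and it is the natural direct verification: the paper itself states Theorem \ref{gauM} as a citation from \cite{DFKqt} and gives no proof, but your single-shift relation $\gamma^{-1}\Gamma_i\gamma=q^{1/2}x_i\Gamma_i$ is exactly the type-$A$ analogue of the identity $\gamma^{-1}\Gamma_i^2\gamma=qx_i\Gamma_i^2$ that the paper invokes for the $BCD$ Gaussian \eqref{gau}, and the rest of your argument (iterating, distributing over the commuting shifts in $I$, collecting $q^{ak/2}\prod_{i\in I}x_i^k$) reproduces \eqref{macMo} term by term. One point worth making explicit: equation \eqref{gaussian} as printed has a \emph{linear} exponent ${\rm Log}(x_i)/(2{\rm Log}(q))$, for which $\gamma$ would be a monomial and the conjugation would produce only a constant, not the factor $x_i$; your derivation correctly uses the intended quadratic exponent $({\rm Log}\,x_i)^2/(2{\rm Log}\,q)$ (consistent with the $BCD$ version \eqref{gau} and the remark there about $q\to q^2$), and you rightly flag that this quadratic dependence is what the whole statement hinges on.
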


The difference operators $\{M_{a,k}\}$, together with the elementary symmetric functions of $\bx$, are the image in the functional representation of the generators of the spherical DAHA.

\bigskip
The double affine Hecke algebra and  corresponding Macdonald operators are defined for  other Lie algebras.
In the following sections, we will use this as the inspiration to give conjectures for the functional representations of the quantum Q-systems for the other
classical types by following the inverse reasoning: Starting from the appropriate choice of Macdonald difference operators, act with the adjoint action of $\gamma$ to obtain the discrete
time-evolved operators, imitating the contents of Theorem \ref{gauM}. The resulting $q$-difference operators, in the limit $t\to\infty$, are  (conjecturally) solutions
of renormalized quantum Q-systems for the other classical types.

\subsection{Raising and lowering operators}

The dual $q$-Whittaker functions are defined as
$$\Pi_\lambda\equiv\Pi_\lambda(q^{-1};\bx)=\lim_{t\to \infty} P_\lambda (q,t;\bx),$$
where the $P_\lambda$ are the Macdonald polynomials for type $A_{N-1}$ \cite{macdo}.  Here, the partition $\lambda$ corresponds to a dominant  integral weight of $\gl_N$,
 $\lambda=\sum_{i=1}^N \lambda_i e_i$ in the standard basis of $\R^N$,
with 
$\lambda_1\geq \lambda_2 \geq \cdots \geq \lambda_N\geq 0$. 
The polynomials $P_\lambda$ 
are common monic
eigenfunctions of the Macdonald operators $D_a$
\begin{equation}\label{macdopA}
D_a=\sum_{I\subset [1,N]\atop |I|=a} 
\prod_{i\in I\atop j\not\in I} \frac{t x_i-x_j}{x_i-x_j}\, \prod_{i\in I} \Gamma_i ,  \quad a\in [1,N],
\end{equation}
with 
$$D_a P_\lambda = t^{-\frac{a(a-1)}{2}}\, e_a(q^{\lambda_1} t^{N-1}, q^{\lambda_2} t^{N-2},\ldots, q^{\lambda_N} )P_\lambda,$$
where $e_a$ are the elementary symmetric functions in $N$ variables. 
Using $\displaystyle\lim_{t\to\infty} t^{-a(N-a)}\, D_a=M_{a,0}$, this means that the $q$-Whittaker functions are eigenfunctions of the difference operators $M_{a,0}$:
\begin{equation}\label{eigenAN}
M_{a,0}\, \Pi_\lambda= q^{(\lambda,\omega_a)}\, \Pi_\lambda , \quad a\in [1,N],
\end{equation}
where $\omega_a=e_1+e_2+\cdots +e_a$ are the fundamental weights of $\mathfrak{gl}_N$.

In \cite{DFK15}  we showed that the operators $M_{a;\pm 1}$ are the limit $t\to\infty$
of the raising and lowering operators for Macdonald polynomials constructed by Kirillov and Noumi \cite{Kinoum}. However, the commutation relations in \eqref{Msys}  provide an easier proof, which we present here.
\begin{thm}\label{KNAN}
The operators $M_{a,\pm 1}$ act on the polynomials $\Pi_\lambda$ as follows:
\begin{eqnarray}
M_{a,1}\, \Pi_\lambda&=& q^{(\lambda,\omega_a)}\, \Pi_{\lambda+\omega_a}, \label{raisingAN}\\
M_{a,-1}\, \Pi_\lambda&=& q^{(\lambda,\omega_a)}\, \big(1-q^{-(\lambda,\al_{a})}\big)\, \Pi_{\lambda-\omega_a}, \label{loweringAN}
\end{eqnarray}
where $\al_a=e_a-e_{a+1}$
are the simple roots of $\sl_N$. 
\end{thm}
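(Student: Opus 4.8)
The plan is to exploit the exact commutation relations of the renormalized system \eqref{Msys} together with the fact, recorded in \eqref{eigenAN}, that the dual $q$-Whittaker functions $\Pi_\lambda$ diagonalize the (mutually commuting, since the $k=k'$ case of the first line of \eqref{Msys} is trivial) family $\{M_{a,0}\}$ with eigenvalue $q^{(\lambda,\omega_a)}$. Two elementary inputs drive everything. First, $\{\Pi_\lambda\}$ (over dominant $\gl_N$-weights) is a basis of the space of symmetric polynomials and the joint spectrum of $\{M_{a,0}\}$ is simple: the eigenvalue tuple $\big((\lambda,\omega_a)\big)_{a=1}^N$ records the partial sums $\lambda_1+\cdots+\lambda_a$ and hence determines $\lambda$. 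Second, in the standard basis one has $(\omega_a,\omega_b)=\min(a,b)$, matching the exponent $\min(a,b)$ appearing in the $q$-commutation relation of \eqref{Msys}.

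For the raising operator I would take $k=0$, $k'=1$ in the first line of \eqref{Msys}, giving $M_{b,0}\,M_{a,1}=q^{\min(a,b)}\,M_{a,1}\,M_{b,0}$ for all $a,b$. Applying both sides to $\Pi_\lambda$ and using \eqref{eigenAN} shows that $M_{a,1}\Pi_\lambda$ is an eigenvector of every $M_{b,0}$ with eigenvalue $q^{\min(a,b)+(\lambda,\omega_b)}=q^{(\lambda+\omega_a,\,\omega_b)}$. By spectral simplicity this tuple is realized only by $\Pi_{\lambda+\omega_a}$ (and $\lambda+\omega_a$ is automatically dominant), so $M_{a,1}\Pi_\lambda=c_a(\lambda)\,\Pi_{\lambda+\omega_a}$ for a scalar $c_a(\lambda)$. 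To evaluate it I would compare leading monomials in dominance order: writing $\Pi_\lambda=x^\lambda+(\text{lower})$ and passing to the limit $x_1\gg x_2\gg\cdots\gg x_N$, each factor $x_i/(x_i-x_j)$ in \eqref{macMo} tends to $1$ when $i<j$ and to $0$ when $i>j$, so only the subset $I=\{1,\dots,a\}$ survives, contributing $q^{\lambda_1+\cdots+\lambda_a}\,x^{\lambda+\omega_a}$. Since $\Pi_{\lambda+\omega_a}$ is monic, $c_a(\lambda)=q^{(\lambda,\omega_a)}$, which is \eqref{raisingAN}.

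For the lowering operator I would run the same commutation argument with $k=0$, $k'=-1$, obtaining $M_{b,0}\,M_{a,-1}=q^{-\min(a,b)}\,M_{a,-1}\,M_{b,0}$, so $M_{a,-1}\Pi_\lambda$ is a simultaneous eigenvector with eigenvalue tuple $q^{(\lambda-\omega_a,\,\omega_b)}$. When $\lambda_a>\lambda_{a+1}$ the weight $\lambda-\omega_a$ is dominant and this forces $M_{a,-1}\Pi_\lambda=c'_a(\lambda)\,\Pi_{\lambda-\omega_a}$. Rather than computing leading terms, I would fix $c'_a(\lambda)$ by feeding this into the recursion of \eqref{Msys} at $k=0$, namely $q^a\,M_{a,1}M_{a,-1}=M_{a,0}^2-M_{a+1,0}M_{a-1,0}$. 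Acting on $\Pi_\lambda$, the right side is $\big(q^{2(\lambda,\omega_a)}-q^{(\lambda,\omega_{a+1})+(\lambda,\omega_{a-1})}\big)\Pi_\lambda=q^{2(\lambda,\omega_a)}\big(1-q^{-(\lambda,\al_a)}\big)\Pi_\lambda$, using $2\omega_a-\omega_{a+1}-\omega_{a-1}=\al_a$; the left side is $q^a\,c'_a(\lambda)\,M_{a,1}\Pi_{\lambda-\omega_a}=q^a\,c'_a(\lambda)\,q^{(\lambda-\omega_a,\omega_a)}\Pi_\lambda=c'_a(\lambda)\,q^{(\lambda,\omega_a)}\Pi_\lambda$ by the raising formula and $(\omega_a,\omega_a)=a$. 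Equating coefficients gives $c'_a(\lambda)=q^{(\lambda,\omega_a)}\big(1-q^{-(\lambda,\al_a)}\big)$, which is \eqref{loweringAN}.

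The main obstacle is the degenerate case $\lambda_a=\lambda_{a+1}$, where $\lambda-\omega_a$ is not dominant and $\Pi_{\lambda-\omega_a}$ is undefined. Here the target eigenvalue tuple corresponds to the partial-sum sequence of $\lambda-\omega_a$, which fails to be non-increasing precisely at position $a$, hence lies outside the simple joint spectrum of $\{M_{a,0}\}$; its eigenspace is therefore $\{0\}$ and $M_{a,-1}\Pi_\lambda=0$. This is exactly consistent with \eqref{loweringAN}, whose prefactor $1-q^{-(\lambda,\al_a)}=1-q^0$ vanishes when $(\lambda,\al_a)=\lambda_a-\lambda_{a+1}=0$; the same conclusion also drops out of the recursion above, since its right-hand side vanishes while $M_{a,1}$ is injective on $q$-Whittaker functions. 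A minor technical point to dispatch is that each $M_{a,\pm1}$ preserves the space of symmetric polynomials, which follows a posteriori from landing in a one-dimensional polynomial eigenspace.
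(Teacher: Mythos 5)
Your proposal is correct and follows essentially the same route as the paper's proof: the $q$-commutation relations of \eqref{Msys} at $k=0$, $k'=\pm1$ combined with the simple joint spectrum of $\{M_{a,0}\}$ from \eqref{eigenAN} force $M_{a,\pm1}\Pi_\lambda\propto\Pi_{\lambda\pm\omega_a}$, the raising constant is fixed by the leading monomial in the regime $|x_1|\gg\cdots\gg|x_N|$, and the lowering constant by the exchange relation $q^aM_{a,1}M_{a,-1}=M_{a,0}^2-M_{a+1,0}M_{a-1,0}$. Your treatment of the degenerate case $\lambda_a=\lambda_{a+1}$ is slightly more explicit than the paper's one-line remark, but the argument is the same.
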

Note that if $\lambda$ is a partition, $\lambda+\omega_a$ is also a partition, and if $\lambda-\omega_a$ is not a partition, the scalar factor $(1-q^{(\lambda,\alpha_a)})$ vanishes.
\begin{proof}
Up to scalar multiple, the polynomials $\Pi_\lambda$ are uniquely determined by the collection of eigenvalues $q^{(\omega_a,\lambda)}$ corresponding to the diagonal action of the operators $M_{a,0}$,
$a=1,2,...,N$.
The commutation relations in \eqref{Msys} are
$$ M_{a,0}\, M_{b,\pm 1} =q^{\pm {\min}(a,b)} \, M_{b,0\pm 1}\, M_{a,0}. 
$$
Applying this to $\Pi_\lambda$ and using \eqref{eigenAN},
$$ M_{a,0}\, M_{b,\pm 1} \Pi_\lambda=q^{(\lambda,\omega_a)\pm  {\min}(a,b)}\, M_{b,\pm 1} \Pi_\lambda=q^{(\lambda\pm\omega_b,\omega_a)} M_{b,\pm1}\Pi_\lambda,$$
since ${\min}(a,b)=(\omega_b,\omega_a)$. 
That is, $M_{b,\pm 1} \Pi_\lambda$ is an eigenfunction of $M_{a,0}$
with eigenvalue $q^{(\lambda\pm \omega_b,\omega_a)}$. Therefore
there exist scalars 
$c^\pm_{\lambda,b}$ such that
$M_{b,\pm 1} \Pi_\lambda =c^\pm_{\lambda,b}\,\Pi_{\lambda\pm \omega_b}$.

Recall that the Macdonald polynomials $P_\lambda$, and therefore $\Pi_\lambda$,  have a triangular decomposition with respect to the monomial symmetric functions with leading term $m_\lambda$. The following analysis provides the scalar factor $c^+_{\lambda,b}$.
When
$|x_1|\gg|x_2|\gg \cdots \gg|x_N|$, $\Pi_\lambda =x_1^{\lambda_1}\cdots x_{N}^{\lambda_N}+{\rm lower \ order}$ and
$M_{b,1}=x_1x_2...x_b\, \Gamma_1\Gamma_2\cdots \Gamma_b+{\rm lower \ order}$. Thus,
$$ M_{b,1}\, \Pi_\lambda= q^{\lambda_1+\cdots +\lambda_b} x_1^{\lambda_1+1}\cdots x_\beta^{\lambda_b+1}
x_{b+1}^{\lambda_{b+1}}
\cdots x_n^{\lambda_N}+{\rm lower \ order}=q^{(\lambda,\omega_b)}\Pi_{\lambda+\omega_b}+{\rm lower \ order},$$
and thus $c^+_{\lambda,b}=q^{(\lambda,\omega_b)}$. 

Applying the exchange relation in \eqref{Msys} with $k=0$
to $\Pi_\lambda$, we get
$$q^a\, M_{a,1}\, M_{a,-1}\, \Pi_\lambda=q^{a} \, c^+_{\lambda-\omega_a,a}\, c^-_{\lambda,a} \Pi_\lambda=
(q^{2(\lambda,\omega_a)}-q^{(\lambda,\omega_{a+1}+\omega_{a-1})}) \Pi_\lambda,$$
which shows that $c^-_{\lambda,a}=q^{(\lambda,\omega_a)}-q^{(\lambda,\omega_{a+1}+\omega_{a-1}-\omega_a)}$ and the Theorem follows.
\end{proof}

\subsection{Graded characters in terms of difference operators}
The difference operators \eqref{macMo} can be used to efficiently generate graded characters \eqref{linfun}. Theorem \ref{macdoQsys} is a necessary condition for the following theorem:
\begin{thm} (\cite{DFK15}, Corollary 18):
Starting with the trivial character $\chi_0=1$, the difference operators \eqref{macMo} act consecutively to generate
the character of the graded tensor product of KR-modules 
$\otimes_{a,i} \KR_{i\omega_a}^{\otimes n_{a,i}}$ as follows: 
\begin{eqnarray}
\chi_\bn(q^{-1},\bx)&= & q^{-\half Q(\bn)}
 \prod_{a=1}^{N-1}(M_{a,k})^{n_{a,k}}\prod_{a=1}^{N-1}(M_{a,k-1})^{n_{a,k-1}}\cdots
\prod_{a=1}^{N-1}(M_{a,1})^{n_{a,1}}\, 1.
\end{eqnarray}
where
$$
Q(\bn) =  \sum_{a,b=1}^{N-1}\sum_{i,j\geq 1} n_{a,i}\, {\rm min}(i,j)\, {\rm min}(a,b)\, n_{b,j}-\sum_{a=1}^{N-1}\sum_{i\geq 1} i\, a\,  n_{a,i}.
$$
\end{thm}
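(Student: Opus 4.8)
The plan is to derive the formula from the defining property \eqref{linfun} of the linear functional $\phi$ together with the functional representation $\rho:\widetilde\Q_{a,k}\mapsto M_{a,k}$ of Theorem \ref{macdoQsys}. The conceptual heart of the argument is the commuting square displayed in the introduction: since $\rho$ intertwines right-multiplication by $\widetilde\Q_{b,k'}$ inside $\phi$ with the action of the difference operator $M_{b,k'}$ of \eqref{macMo} on symmetric functions, the functional $\phi$ itself can be realized as evaluation on the constant function $1$ in the representation $\rho$. Concretely, I would first record the base case $\chi_0=\phi(1)=1$, and then argue that for any multi-index $\bn$ the image under $\phi$ of the ordered product $\prod^{\rightarrow}(\Q^*_{a,k})^{n_{a,k}}$ equals the corresponding ordered product of operators $M_{a,k}$ applied to $1$. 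Because the $M_{a,k}$ obey exactly the relations \eqref{Msys} satisfied by the $\widetilde\Q_{a,k}$, this transport of the identity \eqref{linfun} is consistent and unambiguous.

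First I would set up the dictionary between the abstract ordered product and the operator product. The functional acts on the opposite generators $\Q^*_{a,k}$ with parameter $q^{-1}$, which is why $\chi_\bn(q^{-1};\bx)$---rather than $\chi_\bn(q;\bx)$---appears on the left; under $\rho$ these correspond to the operators $M_{a,k}$ with the substitution $q\to q^{-1}$. I would then verify that the specific ordering in the statement, namely decreasing level $k$ from left to right with $a$ increasing from $1$ to $N-1$ within each level, is precisely the $\prod^{\rightarrow}$ ordering that keeps the product inside $\phi$ admissible. Iterating the commuting square along this ordering, starting from $1$, then produces $\chi_\bn(q^{-1};\bx)$ up to an overall scalar.

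The real work, and the main obstacle, is pinning down that scalar as $q^{-\frac12 Q(\bn)}$. There are two independent contributions. The first comes from the renormalization \eqref{renorm}, $\widetilde\Q_{a,k}=q^{-\frac12(k+\frac N2)\lambda_{a,a}}\,\Q_{a,k}\,\Delta^{a/N}$, whose per-factor exponent, summed over the $n_{a,i}$ copies of each generator and combined with the degree shifts $\deg(\Q_{a,k})=ak$, I expect to collapse to the linear term $-\sum_{a,i} i\,a\,n_{a,i}$ of $Q(\bn)$. The second comes from normal-ordering: to bring the raw product into the admissible ordered form one commutes pairs of generators using $\widetilde\Q_{a,i}\,\widetilde\Q_{b,j}=q^{(j-i)\min(a,b)}\,\widetilde\Q_{b,j}\,\widetilde\Q_{a,i}$, accumulating one factor of $q^{(j-i)\min(a,b)}$ per transposed pair. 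I expect the total, after symmetrizing over the $n_{a,i}n_{b,j}$ unordered pairs, to assemble into the quadratic term $\sum_{a,b,i,j} n_{a,i}\min(i,j)\min(a,b)\,n_{b,j}$.

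The delicate bookkeeping is entirely in this second contribution: one must check that summing the antisymmetric exponents $(j-i)\min(a,b)$ over exactly the pairs that get transposed---and no others---produces the \emph{symmetric} kernel $\min(i,j)\min(a,b)$. This is the quantum analogue of the classical statement that the energy grading $Q(\bn)$ is governed by the inverse Cartan data, and I would establish it by evaluating the matrix $\Lambda$ of \eqref{LambdaMatrix}--\eqref{lambda} on the multi-degree vector $\bn$; the $q\to q^{-1}$ substitution then flips the overall sign and yields the prefactor $q^{-\frac12 Q(\bn)}$. All remaining steps---the base case, the intertwining property, and the admissibility of the chosen ordering---are formal consequences of Theorem \ref{macdoQsys} and the construction of $\phi$.
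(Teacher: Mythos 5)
The paper itself offers no proof of this statement: it is quoted directly from \cite{DFK15} (Corollary 18), so there is no internal argument to compare yours against. Judged on its own terms, your proposal has a genuine gap at its center.

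You treat the identity \eqref{linfun} and the commuting square from the introduction as established inputs, and then assert that ``the functional $\phi$ itself can be realized as evaluation on the constant function $1$ in the representation $\rho$,'' with everything else a ``formal consequence of Theorem \ref{macdoQsys} and the construction of $\phi$.'' But that realization is precisely the content of the theorem, not a formal consequence of the algebra relations. Theorem \ref{macdoQsys} only says that the operators $M_{a,k}$ of \eqref{macMo} satisfy the relations \eqref{Msys}; an algebra has many representations and many linear functionals, and knowing that $\rho$ is a representation does not tell you that $X\mapsto \rho(X)\cdot 1$ computes the specific functional whose values are the graded characters. The missing, essential step is the verification that the explicit difference operator $M_{b,k'}$, applied to the explicit (fermionic or constant-term) formula for $\chi_{\bn}$, yields $\chi_{\bn+\epsilon_{b,k'}}$ up to the stated power of $q$ when $k'$ is large enough --- i.e.\ the bottom edge of the commuting square. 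In \cite{DFK15} that is established by a direct computation on the character formulas; the quantum Q-system relations then only serve to propagate the identity consistently under reordering. Without that input your argument is circular: the diagram that illustrates the theorem is being used as a premise for proving it. Your scalar bookkeeping is plausible in outline (the linear term of $Q(\bn)$ from the renormalization \eqref{renorm}, the quadratic term from normal-ordering the $q$-commuting factors), but it is left at the level of ``I expect,'' and in any case it only pins down the prefactor once the unnormalized identity is already known.
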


Alternatively, when $k\geq \max\{j: n_{a,j}>0\}$, 
$$
D_{a,k} \chi_{\bn} (q^{-1};\bx) = q^{(\omega_a,\sum_{b,j} j n_{b,j}\omega_b)}\chi_{\bn+\epsilon_{a,k}}(q^{-1};\bx),
$$
where we write $\bn=\sum_{b,j} n_{b,j}\epsilon_{b,j}$.

\section{The quantum Q-system conjectures for types $BCD$} \label{sec:conj1}

In order to formulate the conjectures for the functional representation of the quantum Q-systems of types $D_N,B_N,C_N$, we start with the results
of Macdonald and van Diejen. There are $N$ algebraically independent commuting Hamiltonians in the sDAHA of those types. Our goal is to find a set that will be the seed, in the $q$-Whittaker limit, of the Q-system solutions.

\subsection{Macdonald and van Diejen operators}
\label{diff_ops}

In \cite{macdoroot}, Macdonald constructed certain difference operators for types $D_N,B_N,C_N$, corresponding to minuscule coweights. For these types, there are, respectively,
$3,1$ and $1$ minuscule coweights, indexed by  some of the extremal nodes of the Dynkin diagrams. The corresponding Macdonald operators are
\begin{eqnarray}
D_N:\quad {\mathcal E}_1^{(D_N)}&=&\sum_{\epsilon=\pm 1}\sum_{i=1}^N \prod_{j\neq i} \frac{1-t x_i^\epsilon x_j}{1-x_i^\epsilon x_j}
\frac{t x_i^\epsilon-x_j}{x_i^\epsilon-x_j} \Gamma_i^{2\epsilon} ,\label{macDN1}\\
{\mathcal E}_{N-1}^{(D_N)}&=&  \sum_{\epsilon_1,...,\epsilon_N=\pm 1\atop \epsilon_1\epsilon_2\cdots \epsilon_N=-1}  
\prod_{1\leq i<j\leq N} \frac{1-t x_i^{\epsilon_i} x_j^{\epsilon_j}}{1-x_i^{\epsilon_i}x_j^{\epsilon_j}} 
\prod_{i=1}^N \Gamma_i^{\epsilon_i}, \label{macDN2}\\
{\mathcal E}_N^{(D_N)}&=&  
\sum_{\epsilon_1,...,\epsilon_N=\pm 1\atop \epsilon_1\epsilon_2\cdots \epsilon_N=1}  \prod_{1\leq i<j\leq N} \frac{1-t x_i^{\epsilon_i} x_j^{\epsilon_j}}{1-x_i^{\epsilon_i}x_j^{\epsilon_j}} \prod_{i=1}^N \Gamma_i^{\epsilon_i}.\label{macDN3}\\
B_N:\quad {\mathcal E}_{1}^{(B_N)}&=&\sum_{\epsilon=\pm 1}
\sum_{i=1}^N \frac{1-t x_i^{\epsilon}}{1-x_i^{\epsilon}}
\prod_{j\neq i} \frac{1-t x_i^{\epsilon} x_j}{1-x_i^{\epsilon}x_j}\frac{t x_i^{\epsilon} -x_j}{x_i^{\epsilon}-x_j}
\Gamma_i^{2\epsilon}.\label{macBN}\\
C_N:\quad {\mathcal E}_{N}^{(C_N)}&=&\sum_{\epsilon_1,...,\epsilon_N=\pm 1}\prod_{i=1}^N
\frac{1-t x_i^{2\epsilon_i}}{1-x_i^{2\epsilon_i}} 
 \prod_{1\leq i<j\leq N}  \frac{1-t x_i^{\epsilon_i}x_j^{\epsilon_j}}{1-x_i^{\epsilon_i}x_j^{\epsilon_j}}
\prod_{i=1}^N \Gamma_i^{\epsilon_i}.\label{macCN}
\end{eqnarray}

The Macdonald operator for $C_N$ is of ``order" $N$, namely acts by shifts of  $N$ variables in each term, as opposed to the operators ${\mathcal E}_1^{(D_N)}$ and 
${\mathcal E}_{1}^{(B_N)}$, which are linear combinations of shifts of a single variable.
A first order difference operator for $C_N$, $ {\mathcal E}_{1}^{(C_N)}$, can be obtained using the commuting operators constructed in \cite{vandiej,vandiejtwo}. 
We choose the following first order $C_N$ operator, which is a particular linear combination of the identity and the first order van Diejen operator:

\begin{equation}
{\mathcal E}_{1}^{(C_N)}=(1+t^{N+1})\frac{1-t^{N}}{1-t}+ \sum_{\epsilon=\pm 1}\sum_{i=1}^N \frac{1-t x_i^{2\epsilon}}{1-x_i^{2\epsilon}} 
\frac{1-t q^2 x_i^{2\epsilon}}{1-q^2 x_i^{2\epsilon}} \prod_{j\neq i}  \frac{1-t x_i^{\epsilon}x_j}{1-x_i^{\epsilon}x_j}
 \frac{t x_i^{\epsilon}-x_j}{x_i^{\epsilon}-x_j} \, (\Gamma_i^{2\epsilon}-1) \label{macCN1}
\end{equation}

This particular choice ensures that the corresponding eigenfunctions are the Macdonald polynomials $P_\lambda(\bx)$ of type $C_N$ with eigenvalues 
$$t^N\, e_1(\{ t^{N+1-i} q^{2\lambda_i}\}_{i=1}^N),\qquad \hbox{where }e_1(z_1,z_2,...,z_N)=\sum_{i=1}^N (z_i+z_i^{-1}).$$
In particular, for $P_0(\bx)=1$, we have ${\mathcal E}_{1}^{(C_N)} 1= t^N\,e_1(\{t,t^2,...,t^{N}\})=(1+t^{N+1})\frac{1-t^{N}}{1-t}$. 

\subsection{The $q$-Whittaker limit}\label{Moperators}

Each of the operators $\E_i^{(\g)}$ in Equations \eqref{macDN1}--\eqref{macCN1} have suitable limits as $t\to\infty$, denoted by
$$M_{i,0}^{(\g)}:=\lim_{t\to\infty} t^{-a_i^{(\g)}} {\mathcal E}_i^{(\g)},$$
where:
\begin{eqnarray*}
a_1^{(D_N)}&=&2(N-1),\quad a_{N-1}^{(D_N)}=a_{N}^{(D_N)}=\frac{N(N-1)}{2},\\
a_1^{(B_N)}&=&2N-1,\\ 
a_1^{(C_N)}&=&2N,\quad a_N^{(C_N)}=\frac{N(N+1)}{2} .
\end{eqnarray*}

It is also necessary to define the additional operator in type $C_N$: 
\begin{equation}\label{macCN2}
M_{1,1}^{(C_N)}:= \sum_{\epsilon=\pm 1}\sum_{i=1}^N \frac{x_i^{2\epsilon}}{x_i^{2\epsilon}-1} 
\frac{q^2 x_i^{2\epsilon}}{q^2 x_i^{2\epsilon}-1} \prod_{j\neq i}  \frac{x_i^{\epsilon}x_j}{x_i^{\epsilon}x_j-1}
 \frac{x_i^{\epsilon}}{x_i^{\epsilon}-x_j} \, x_i^{-\epsilon}(x_i^{2\epsilon} \Gamma_i^{2\epsilon}-q^{-2})
\end{equation}

\subsection{Discrete time evolution by adjoint action of the Gaussian}

By analogy with the $A_{N-1}$ case (see Theorem \ref{gauM}), define the Gaussian function for types $BCD$:
\begin{equation} \gamma:= e^{\sum_{i=1}^N \frac{({\rm Log}(x_i))^2}{4{\rm Log}(q)}}
\end{equation}\label{gau}
{Note the slight modification $q\to q^2$ compared to the $A$ type Gaussian \eqref{gaussian}.} 

The adjoint action of Gaussian function on the difference operators of Section \ref{Moperators} induces a discrete time evolution:
\begin{defn}\label{discrete_time_evolution}
Let $k\in \Z$. The difference operators $M_{a,k}^{(\g)}$ are defined as follows:
\begin{eqnarray*}
D_N:\quad {M}_{1,k}^{(D_N)}&:=&q^{-k}\, \gamma^{-k}\, M_{1,0}^{(D_N)} \, \gamma^k \\
{M}_{N-1,k}^{(D_N)}&:=&q^{-kN/4}\, \gamma^{-k}\, {M}_{N-1,0}^{(D_N)} \, \gamma^k\\
{M}_{N,k}^{(D_N)}&:=&q^{-kN/4}\, \gamma^{-k}\, {M}_{N,0}^{(D_N)} \, \gamma^k\\
B_N:\quad  {M}_{1,k}^{(B_N)}&:=&q^{-k}\, \gamma^{-k}\, {M}_{1,0}^{(B_N)} \, \gamma^k \\
C_N:\quad {M}_{1,2k}^{(C_N)}&:=&q^{-2k}\, \gamma^{-2k}\,{M}_{1,0}^{(C_N)}\,\gamma^{2k}\\
{M}_{1,2k+1}^{(C_N)}&:=&q^{-2k}\, \gamma^{-2k}\,M_{1,1}^{(C_N)}\,\gamma^{2k}\\
{M}_{N,k}^{(C_N)}&:=& q^{-kN/2}\, \gamma^{-2k}\, {M}_{N,0}^{(C_N)}\, \gamma^{2k }
\end{eqnarray*}
\end{defn}
Note that in type $C$, the 
definition for the time evolution of $M_{1,k}^{(C_N)}$ splits into two separate evolutions for even and odd $k$,
involving the operators $M_{1,0}^{(C_N)}$ for even $k$ and $M_{1,1}^{(C_N)}$ of \eqref{macCN2} for odd $k$.
This is due to the fact that $\al_1$ is a short root in type $C$. The same parity phenomenon is expected for the
short root $\al_N$ of $B_N$.

Using the simple relation $\gamma^{-1}\,\Gamma_i^2\, \gamma= q x_i \, \Gamma_i^2$ for $\gamma$ as in \eqref{gau}, we see that Definition \ref{discrete_time_evolution} results immediately in the following explicit expressions: 

\begin{eqnarray}\hbox{Type $D_N$:}\qquad
{M}_{1,k}^{(D_N)}&=&\sum_{\epsilon=\pm 1}\sum_{i=1}^N \prod_{j\neq i} \frac{x_i^\epsilon x_j}{x_i^\epsilon x_j-1}
\frac{x_i^\epsilon}{x_i^\epsilon-x_j} x_i^{k\epsilon}\, \Gamma_i^{2\epsilon}\nonumber\\
{M}_{N-1,k}^{(D_N)}&=&\sum_{\epsilon_1,...,\epsilon_N=\pm 1\atop \epsilon_1\epsilon_2\cdots \epsilon_N=-1}  
\prod_{1\leq i<j\leq N} \frac{x_i^{\epsilon_i} x_j^{\epsilon_j}}{x_i^{\epsilon_i}x_j^{\epsilon_j}-1} 
\prod_{i=1}^N x_i^{\frac{k\epsilon_i}{2}}\, \prod_{i=1}^N \Gamma_i^{\epsilon_i}\label{Dtwo}\\
{M}_{N,k}^{(D_N)}&=&\sum_{\epsilon_1,...,\epsilon_N=\pm 1\atop \epsilon_1\epsilon_2\cdots \epsilon_N=1}  
\prod_{1\leq i<j\leq N} \frac{x_i^{\epsilon_i} x_j^{\epsilon_j}}{x_i^{\epsilon_i}x_j^{\epsilon_j}-1} 
\prod_{i=1}^N x_i^{\frac{k\epsilon_i}{2}}\, \prod_{i=1}^N \Gamma_i^{\epsilon_i}\label{Dthree}
\end{eqnarray}
\begin{equation*}
\hbox{Type $B_N:$}\qquad {M}_{1,k}^{(B_N)}=\sum_{\epsilon=\pm 1}
\sum_{i=1}^N \frac{x_i^{\epsilon}}{x_i^{\epsilon}-1}
\prod_{j\neq i} \frac{x_i^{\epsilon} x_j}{x_i^{\epsilon}x_j-1}\frac{x_i^{\epsilon} }{x_i^{\epsilon}-x_j}
x_i^{k\epsilon}\,\Gamma_i^{2\epsilon}
\end{equation*}
\begin{equation*}
\hbox{Type $C_N$:}\qquad{M}_{1,2k}^{(C_N)}=q^{-2k}+ \sum_{\epsilon=\pm 1}\sum_{i=1}^N \frac{x_i^{2\epsilon}}{x_i^{2\epsilon}-1} 
\frac{q^2 x_i^{2\epsilon}}{q^2 x_i^{2\epsilon}-1} \prod_{j\neq i}  \frac{x_i^{\epsilon}x_j}{x_i^{\epsilon}x_j-1}
 \frac{x_i^{\epsilon}}{x_i^{\epsilon}-x_j} \, (x_i^{2k\epsilon}\Gamma_i^{2\epsilon}-q^{-2k})
 \end{equation*}
 \begin{eqnarray}
{M}_{1,2k-1}^{(C_N)}&=&\sum_{\epsilon=\pm 1}\sum_{i=1}^N \frac{x_i^{2\epsilon}}{x_i^{2\epsilon}-1} 
\frac{q^2 x_i^{2\epsilon}}{q^2 x_i^{2\epsilon}-1} \prod_{j\neq i}  \frac{x_i^{\epsilon}x_j}{x_i^{\epsilon}x_j-1}
 \frac{x_i^{\epsilon}}{x_i^{\epsilon}-x_j} \, x_i^{-\epsilon}(x_i^{2k\epsilon} \Gamma_i^{2\epsilon}-q^{-2k})\nonumber \\
{M}_{N,k}^{(C_N)}&=&\sum_{\epsilon_1,...,\epsilon_N=\pm 1}\prod_{i=1}^N
\frac{x_i^{2\epsilon_i}}{x_i^{2\epsilon_i}-1} 
 \prod_{1\leq i<j\leq N}  \frac{x_i^{\epsilon_i}x_j^{\epsilon_j}}{x_i^{\epsilon_i}x_j^{\epsilon_j}-1}
\prod_{i=1}^N x_i^{k\epsilon_i}\,\prod_{i=1}^N \Gamma_i^{\epsilon_i}.\label{Cthree}
\end{eqnarray}

These expressions motivate the main conjectures of this paper. 

\subsection{Other Macdonald operators via quantum determinants}

The list of operators defined in the previous section  can be completed to include $M_{a,k}^{(\g)}$ for the Dynkin labels $a$ which are not in the list above, by using quantum determinants.

Comparing the 
commutation relations and mutation relations of the quantum $BCD$ Q-systems involving only the Dynkin labels $\{1,...,N-1\}$ (types $BC$) and $\{1,...,N-2\}$ (type $D$), we see that the relations are the same
as in type $A$ of \eqref{Msys}, expressed in the variable $M_{a,k}$, up to the change
of parameter $q\mapsto q^2$ (which results in a change in the first factor $q^a \mapsto q^{2a}$, and the commutation relations $q^{\min(a,b)}
\mapsto q^{2{\min(a,b)}}$). This observation leads to the following definitions:

\begin{defn}
For types $BCD$, define the difference operators $M_{a,k}^{(\g)}$
by the following quantum determinants:
\begin{equation}\label{firmac}
M^{(\g)}_{a,k}:= |{\bf M}^{(\g)}(\{\underbrace{k,k,...,k}_{a\ \rm{times}}\})|_{q^2}=M_{k,k,...,k}^{(\g)} \qquad (a=1,2,...,n_{\g})
\end{equation}
where $n_{D_N}=N-2$, and $n_{B_N}=n_{C_N}=N-1$. Here the matrix ${\bf M}^{(\g)}(\{\ba \})$
is obtained by replacing $M_{1,k}$ with $M^{(\g)}_{1,k}$ in the defining expression \eqref{hank},
and the quantum determinant has parameter $q^2$ instead of $q$ in the vandermonde factor.
\end{defn}

By homogeneity of the quantum determinant as a polynomial in the variables $M^{(\g)}_{1,k}$,  Definition \eqref{firmac} is
compatible with the discrete time evolution relations:
\begin{eqnarray*}
M_{a,k}^{(D_N)}&=&q^{-k a}\, \gamma^{-k} \, M_{a,0}^{(D_N)} \, \gamma^k,\qquad (a=1,2,...,N-2);\\
M_{a,k}^{(B_N)}&=&q^{-k a}\, \gamma^{-k} \, M_{a,0}^{(B_N)} \, \gamma^k,\qquad (a=1,2,...,N-1);\\
M_{a,2k+\eta}^{(C_N)}&=&q^{-2k a}\, \gamma^{-2k} \, M_{a,\eta}^{(C_N)} \, \gamma^{2k},\qquad (a=1,2,...,N-1,\eta=0,1),
\end{eqnarray*}
with the Gaussian function $\gamma$ as in \eqref{gau}.

\subsection{The quantum Q-system conjectures}

The main conjecture of this paper is the following:

\begin{conj}\label{qqconj}
For $G=D_N,B_N,C_N$, the Macdonald operators $M_{a,n}^{(\g)}$ obey
the following renormalized versions (we omit the superscript $\g$
for simplicity) of the quantum Q-systems (\ref{DNqqsys}-\ref{CNqqsys}):
\begin{eqnarray}
\hbox{Type $D_N$:}\qquad\qquad\qquad M_{a,k}\, M_{b,p}&=&q^{\Lambda_{a,b}(p-k)}\, M_{b,p}\, M_{a,k}\nonumber \\
q^{2 a}M_{a,k+1}M_{a,k-1}&=&M_{a,k}^2-M_{a+1,k}M_{a-1,k}\qquad (a\in[1,N-3])\nonumber \\
q^{2(N-2)}M_{N-2,k+1}M_{N-2,k-1}&=& M_{N-2,k}^2- q^{-\frac{(N-2)k}{2}}\, M_{N,k}M_{N-1,k}M_{N-3,k}
\label{DMsys} \\
q^{\frac{N}{2}}M_{N-1,k+1}M_{N-1,k-1}&=& M_{N-1,k}^2- q^{\frac{(N-4)k}{2}}\,M_{N-2,k}\nonumber \\
q^{\frac{N}{2}}M_{N,k+1}M_{N,k-1}&=& M_{N,k}^2- q^{\frac{(N-4)k}{2}}\,M_{N-2,k}\nonumber
\end{eqnarray}
\begin{eqnarray}
\hbox{Type $B_N$}:\qquad \qquad\qquad M_{a,k}\, M_{b,p}&=& q^{\Lambda_{a,b} p-\Lambda_{b,a}k}\, M_{b,p}\, M_{a,k}\nonumber \\
q^{2a} \,M_{a,k+1}\,M_{a,k-1}&=&(M_{a,k})^2-M_{a+1,k}\, M_{a-1,k}\quad (a\in [1,N-2])\nonumber \\
q^{2N-2} \,M_{N-1,k+1}\,M_{N-1,k-1}&=&(M_{N-1,k})^2-M_{N,2k}\, M_{N-2,k}\label{BMsysthird} \\
q^N\, M_{N,2k+1}\,M_{N,2k-1}&=& (M_{N,2k})^2 -q^{-2k} (M_{N-1,k})^2\nonumber\\
q^N\, M_{N,2k+2}\,M_{N,2k}&=& (M_{N,2k+1})^2 -q^{N-1-(2k+1)} M_{N-1,k+1}\,M_{N-1,k}\label{BMsyslast}\\
\end{eqnarray}
\begin{eqnarray}
\hbox{Type $C_N$:}\qquad\qquad\qquad M_{a,k}\, M_{b,p}&=&q^{\Lambda_{a,b} p-\Lambda_{b,a}k}\, 
M_{b,p}\,M_{a,k}\nonumber \\
q^{2a} M_{a,k+1}\, M_{a,k-1}&=&M_{a,k}^2- M_{a+1,k}M_{a-1,k} \qquad (a\in[1,N-2])\nonumber \\
q^{2N-2} M_{N-1,2k+1}M_{N-1,2k-1}&=&M_{N-1,2k}^2 - q^{-Nk}\,M_{N-2,2k} M_{N,k}^2\nonumber  \\
q^{2N-2} M_{N-1,2k+2}M_{N-1,2k}&=&M_{N-1,2k+1}^2 - q^{-Nk}\,M_{N-2,2k+1} M_{N,k+1}M_{N,k}\nonumber  \\
q^N M_{N,k+1}\,M_{N,k-1}&=& M_{N,k}^2- q^{(N-2)k}\,M_{N-1,2k}\label{CMsys}
\end{eqnarray}
\end{conj}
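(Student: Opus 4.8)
The plan is to split Conjecture \ref{qqconj} into three tiers and attack each with a different tool. The first tier is the \emph{bulk} $A$-chain relations: for Dynkin labels $a,b$ in the $A$-type chain ($a,b\in[1,N-2]$ for type $D$, $a,b\in[1,N-1]$ for types $B,C$), Definition \eqref{firmac} realizes $M_{a,k}^{(\g)}$ as the order-$a$ Hankel quantum determinant (with parameter $q^2$) of the first-order operators $M_{1,k}^{(\g)}$. The higher exchange relations of type $A$ are formal Desnanot--Jacobi/Plücker-type identities among such quantum determinants, so they hold automatically once the first-order $q$-commutations are fixed. Thus, after verifying that $M_{1,k}^{(\g)}$ obeys the same first-order relations as $\widetilde\Q_{1,k}$ in \eqref{Msys} under $q\mapsto q^2$, Theorem \ref{macdoQsys} together with the algebraic identities underlying the quantum-determinant theorem of \cite{DFK16} yields every bulk commutation and exchange relation for free. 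The only genuine input at this tier is a finite check on $M_{1,0}^{(\g)}$ and, for the short-root cases, on $M_{1,1}^{(C_N)}$ of \eqref{macCN2}.

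The second tier is propagation in the discrete time $k$. By Definition \ref{discrete_time_evolution} and the quantum-determinant compatibility displayed before Conjecture \ref{qqconj}, every $M_{a,k}^{(\g)}$ is a normalized Gaussian conjugate $\gamma^{-k}M_{a,0}^{(\g)}\gamma^{k}$ (with the even/odd refinement at short-root nodes). I would use the single structural identity $\gamma^{-1}\Gamma_i^2\gamma=q\,x_i\,\Gamma_i^2$ to show that each relation is \emph{covariant} under $k\mapsto k+1$: conjugating an entire relation by $\gamma$ preserves its shape, the explicit $q$-powers---including the asymmetry $\Lambda_{a,b}-\Lambda_{b,a}$ present in types $B,C$---being absorbed exactly by the renormalizing prefactors of Definition \ref{discrete_time_evolution}. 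The $q$-commutation relations, which depend only on the difference $p-k$ in type $D$ and on $\Lambda_{a,b}p-\Lambda_{b,a}k$ in types $B,C$, are manifestly compatible with a uniform time shift. Consequently it suffices to verify each relation on a bounded set of base configurations (two representatives at a short-root node, on account of the parity splitting noted after Definition \ref{discrete_time_evolution}).

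The third and essential tier consists of the \emph{boundary and special-node} relations, which do not descend from type $A$: the couplings between the end of the $A$-chain and the spinor nodes in \eqref{DMsys} (the terms $q^{-(N-2)k/2}M_{N,k}M_{N-1,k}M_{N-3,k}$ and $M_{N-2,k}$), the spinor self-relations $q^{N/2}M_{N,k+1}M_{N,k-1}=M_{N,k}^2-q^{(N-4)k/2}M_{N-2,k}$, and the doubled-time short-root relations of \eqref{BMsysthird}--\eqref{CMsys}. Here I would compute directly from the explicit forms \eqref{Dtwo}, \eqref{Dthree}, \eqref{macCN2}, \eqref{Cthree}: expand each product as a double sum over sign patterns $\vec\epsilon\in\{\pm1\}^N$, push the shifts $\prod_i\Gamma_i^{\epsilon_i}$ through the van Diejen-type rational coefficients of the second factor, and show that the resulting rational-function identity collapses term by term to the stated right-hand side. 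For the spinor self-relation the mechanism is a cancellation between sign patterns that agree on $N-2$ indices and oppose on the remaining pair, leaving precisely the $(N-2)$-variable double-shift quantum determinant $M_{N-2,k}$; this cancellation is the combinatorial heart of the argument.

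I expect this third tier, and within it the spinor exchange relations, to be the main obstacle. Each spinor operator is a sum over all $2^{N-1}$ parity-constrained sign patterns, so $M_{N,k}^2$ and $M_{N,k+1}M_{N,k-1}$ are sums of order $2^{2N-2}$ terms, and the claim is that nearly all of them cancel. Organizing this collapse---most cleanly through the generating-function presentation of the quantum determinant, so that the rational coefficients factor through a Vandermonde product and the cancellations become manifest---is where the real work lies, and is presumably why the statement is offered here as a conjecture. A conceptually cleaner but technically heavier alternative, which I would pursue in parallel, is to derive all relations uniformly as the $t\to\infty$ degeneration of identities in the spherical DAHA of type $BCD$, with the time evolution realized by the $SL_2(\Z)$ (Gaussian) generator exactly as in Theorem \ref{gauM}; making that limit rigorous at the special nodes is the corresponding bottleneck.
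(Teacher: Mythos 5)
There is no proof in the paper to compare against: the statement is presented explicitly as a conjecture, supported only by numerical verification up to $N=6$ and by the small-rank examples of Appendix A. Your proposal is likewise not a proof but a reduction plan. To its credit, tiers 1 and 2 track the way the paper itself organizes the conjecture: the $A$-chain operators are \emph{defined} by the $q^2$-quantum determinants \eqref{firmac} precisely so that the bulk relations would follow from the type-$A$ determinant theorem of \cite{DFK16} once the first-order $q$-commutations are established, and the compatibility of \eqref{firmac} with Gaussian conjugation (displayed just before Conjecture \ref{qqconj}) is exactly your time-covariance reduction. Those two reductions are sound and consistent with Theorem \ref{gauM} and Definition \ref{discrete_time_evolution}.

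The gap is that everything of substance is deferred, and no mechanism is offered that is uniform in $N$. Your tier-1 ``finite check'' is not finite: the required $q$-commutations among the first-order operators $M^{(\g)}_{1,k}$ (and $M^{(C_N)}_{1,1}$ of \eqref{macCN2}) are nontrivial rational-function identities in $N$ variables, one family \emph{for each} $N$, and the plan says nothing about how to prove them for general $N$. More importantly, tier 3 --- the relations coupling the $A$-chain to the spinor and short-root nodes, e.g.\ the term $q^{-\frac{(N-2)k}{2}}M_{N,k}M_{N-1,k}M_{N-3,k}$ in \eqref{DMsys}, the spinor self-relations, and the doubled-time relations \eqref{BMsysthird}--\eqref{CMsys} --- is precisely where the conjecture lives; ``expand over the $2^{2N-2}$ sign patterns and show the terms collapse'' names the difficulty rather than resolving it, as you acknowledge. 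Note also that the strategy the paper itself proposes (Section 6.2) is different in spirit from both of your alternatives: it would first establish the dual action of $M^{(\g)}_{a,0}$ and $M^{(\g)}_{a,1}$ on the $q$-Whittaker basis, where they become elementary operators $X_a,P_a$ with manifest opposite $q$-commutations, and then transport the quantum Q-system back through the left/right anti-homomorphism after constructing the Gaussian in the dual variables. As written, your proposal is a reasonable research program, but it does not establish the statement and should not be presented as a proof.
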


In the case of $B_N$ we have not given an {\it a priori} definition of the order $N$ operators $M_{N,k}$.
They can be constructed as a suitable linear combination
of the limits at $t\to\infty$ of van Diejen operators \cite{vandiej,vandiejtwo}, compatible with the quantum Q-system in type B. 
These operators can be defined also by taking the quantum Q-system as the defining set of equations. First, define
$M_{N,2k}=\vert M(\{k,k,...,k\})\vert_{q^2}$ ($k$ repeated $N$ times), compatible with  Equation \eqref{BMsysthird},
in which $M_{N,2k}$ plays the role of the  quantum determinant of size $N$. Then Equation \eqref{BMsyslast}
gives information about $M_{N,2k+1}$:
$$(M_{N,2k+1})^2=q^N\, M_{N,2k+2}\,M_{N,2k}+q^{N-1-(2k+1)} M_{N-1,k+1}\,M_{N-1,k}.$$
This can be used to determine the relevant difference operators.
The fact that all the other equations of the system are satisfied is a highly non-trivial check.

These conjectures have been checked numerically up to $N=6$. We illustrate them in cases of small rank in Appendix A.

\begin{lemma}
The difference operators $M_{a,k}$ are a functional representation of the quantum Q-system  (\ref{DNqqsys}-\ref{CNqqsys}) up to the following rescaling:
\begin{eqnarray}D_N:\qquad\qquad\qquad
M_{a,k}&=& q^{\frac{a(a+1)}{2} -a(N+k)} \Q_{a,k}\qquad (\al\in[1,N-2])\nonumber\\
M_{N-1,k}&=& q^{-\frac{N(N-1)}{4}-\frac{Nk}{4}}\, \Q_{N-1,k}\nonumber\\
M_{N,k}&=& q^{-\frac{N(N-1)}{4}-\frac{Nk}{4}}\, \Q_{N,k}\qquad \qquad\qquad\qquad \qquad\qquad\qquad \qquad\qquad\label{Dch}\\
\nonumber\\
B_N:\qquad\qquad \qquad M_{a,k}&=& q^{\frac{a^2}{2}-a (N+k)} \,\Q_{a,k}\qquad (a\in[1,N-1])
\nonumber\\
M_{N,k}&=& q^{-\frac{N^2}{2}-\frac{Nk}{2}} \,\Q_{N,k}\label{Bch}\\
\nonumber\\
C_N:\qquad\qquad \qquad M_{a,k}&=&q^{\frac{a(a-1)}{2}-a(N+k)} \,\Q_{a,k}\qquad (a\in[1,N-1])
\nonumber\\
M_{N,k}&=&q^{-\frac{N(N+1)}{4}-\frac{Nk}{2}}\, \Q_{N,k}\label{Cch}
\end{eqnarray}
\end{lemma}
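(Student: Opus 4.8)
The plan is to treat this as a direct bookkeeping verification, taking Conjecture \ref{qqconj} (that the operators $M_{a,k}$ satisfy the renormalized systems \eqref{DMsys}, \eqref{BMsysthird}--\eqref{BMsyslast}, \eqref{CMsys}) as given, and checking that the stated rescaling transforms these into the quantum Q-systems \eqref{BNqqsys}--\eqref{DNqqsys}. Writing each rescaling uniformly as $M_{a,k}=q^{E^{(\g)}_{a,k}}\,\Q_{a,k}$, the central observation is that every prefactor $q^{E^{(\g)}_{a,k}}$ is a scalar, hence central, and can be pulled out of any operator product \emph{without reordering} (this is the key structural difference from the type-$A$ renormalization \eqref{renorm}, which involves the operator $\Delta^{a/N}$). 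Consequently the $q$-commutation relations transform verbatim: the scalar factors cancel on both sides, so $M_{a,k}M_{b,p}=q^{\Lambda_{a,b}p-\Lambda_{b,a}k}M_{b,p}M_{a,k}$ becomes $\Q_{a,k}\Q_{b,p}=q^{\Lambda_{a,b}p-\Lambda_{b,a}k}\Q_{b,p}\Q_{a,k}$, which is exactly the commutation part of \eqref{BNqqsys}--\eqref{DNqqsys} upon identifying $\Lambda_{a,b}$ with $\lambda_{a,b}$ of \eqref{lambda} (for the $A$-type chain both equal $2\,{\min}(a,b)$). Thus the commutation relations cost nothing, and all the content lies in the exchange relations.

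For the exchange relations I would exploit that each $E^{(\g)}_{a,k}$ is \emph{affine-linear in $k$}, and along the $A$-type chain is a quadratic in $a$ with leading coefficient $\tfrac12$ (namely $\tfrac{a^2}{2}+\tfrac{\epsilon_\g a}{2}-a(N+k)$ with $\epsilon_D=1,\ \epsilon_B=0,\ \epsilon_C=-1$), while at the spin nodes it has slope $\tfrac N4$ or $\tfrac N2$ in $k$. Linearity in $k$ forces the second difference $E_{a,k+1}+E_{a,k-1}-2E_{a,k}=0$, so after substituting $M=q^E\Q$ and dividing by $q^{2E_{a,k}}$ the term $M_{a,k}^2$ contributes coefficient $1$ and the left-hand prefactors $q^{2a},q^{N/2},q^{N},\dots$ survive unchanged, matching \eqref{BNqqsys}--\eqref{DNqqsys} on the nose. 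The only genuine checks are the cross-terms, each of which reduces to a scalar identity of the shape
\[
c^{\mathrm{ren}}+\sum_{j} E^{(\g)}_{b_j,k_j}-2\,E^{(\g)}_{a,k}=c^{\mathrm{orig}},
\]
where $c^{\mathrm{ren}}$ is the (possibly $k$-dependent) renormalized prefactor exponent and $c^{\mathrm{orig}}$ the original one. For the interior $A$-chain nodes this is just the statement that the second difference in $a$ of $\tfrac{a^2}{2}$ equals $1$, reproducing the factor $q$ in front of $\Q_{a+1,k}\Q_{a-1,k}$; I would display this one case and the analogous arithmetic for one spin-node relation in each type, leaving the remaining cases as identical finite-difference computations.

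The main obstacle is the family of boundary relations coupling different time scales, where both the $k$-linear and the constant parts of the exponents must cancel simultaneously. In type $D_N$ the spin relations carry the awkward prefactors $q^{-\frac{(N-2)k}{2}}$ and $q^{\frac{(N-4)k}{2}}$, whose $k$-dependence must be shown to cancel against the half-integer slopes $\tfrac N4$ of $E_{N-1,k}$ and $E_{N,k}$; in types $B_N,C_N$ the parity doubling (node $N$ indexed by $k$ while the adjacent chain node is indexed by $2k$) means the $k$-linear part cancels only when the slopes $N,\tfrac N2,N-2$ combine in the right proportion, as in $q^N M_{N,k+1}M_{N,k-1}=M_{N,k}^2-q^{(N-2)k}M_{N-1,2k}$ where one checks $(N-2)k+E_{N-1,2k}-2E_{N,k}=1$ identically. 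A further subtlety, special to type $B_N$, is that the order-$N$ operators $M_{N,k}$ are not prescribed a priori but are \emph{defined} through \eqref{BMsysthird}--\eqref{BMsyslast}; for those relations the rescaling identity is partly definitional, so I would phrase the argument to show that the exponents in \eqref{Bch} are precisely what render the defining equations for $M_{N,2k}$ and $M_{N,2k+1}$ consistent with \eqref{BNqqsys}.
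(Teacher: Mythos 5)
Your proposal is correct and coincides with the paper's own argument: the proof given there is literally ``By straightforward inspection,'' i.e.\ exactly the substitution $M_{a,k}=q^{E_{a,k}}\Q_{a,k}$ with scalar (hence central) prefactors, followed by the finite-difference exponent checks you describe. The sample identities you single out (e.g.\ $(N-2)k+E_{N-1,2k}-2E_{N,k}=1$ in type $C_N$) do verify, so the write-up would just be a fuller account of what the paper leaves to the reader.
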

\begin{proof} By straightforward inspection.\end{proof}

\section{The raising/lowering operator conjectures for types $BCD$}

In this section, we present conjectures that extend the result of Theorem \ref{KNAN} to types $BCD$.
These involve the action of the difference operators $M_{a,k}$ on the dual $q$-Whittaker functions:
$$\Pi_\lambda^{(\g)}(q^{-1};\bx):=\lim_{t\to\infty} P_\lambda^{(\g)}(q,t;\bx).$$
The general idea is that while the operators $M_{a,0}^{(\g)}$ are all limits of Macdonald operators, for which the
$q$-Whittaker functions $\Pi_\lambda^{(\g)}$ are common eigenfunctions, the operators $M_{a,\pm 1}^{(\g)}$ are simple raising and lowering operators on those eigenfunctions.

\begin{conj}\label{knconj}
The operators $M_{a,0}^{(\g)}$ and $M_{a,\pm 1}^{(\g)}$ have the following action
on the $q$-Whittaker functions $\Pi^{(\g)}_\lambda$, valid for all $a\in [1,N]$:
\begin{eqnarray}
D_N, B_N:\qquad  M_{a,0}\, \Pi_\lambda&=&q^{2t_a (\lambda,\omega_a)}\, \Pi_{\lambda} 
\nonumber\\
M_{a,1}\, \Pi_\lambda&=& q^{2t_a(\lambda,\omega_a)}\, \Pi_{\lambda+\omega_a} \nonumber \\
M_{a,-1}\,  \Pi_\lambda&=&q^{2t_a(\lambda,\omega_a)} \big(1-q^{-2t_a(\al_a,\lambda)}\big)\, \Pi_{\lambda-\omega_a}
 \qquad
\label{DKN}\label{BKN}\\
\nonumber\\
C_N:\qquad  M_{a,0}\, \Pi_\lambda&=&q^{t_a(\omega_a,\lambda)}\,\Pi_\lambda \quad \nonumber\\
M_{a,1}\, \Pi_\lambda &=& q^{t_a\,(\omega_a,\lambda)} \, \Pi_{\lambda+\omega_a}  \quad \nonumber\\
M_{a,-1}\, \Pi_\lambda&=& q^{t_a(\omega_a,\lambda)}\big(1-q^{-t_a(\al_{a},\lambda)}\big) \, \Pi_{\lambda-\omega_a},\quad  \label{CKN}
\end{eqnarray}
where $\omega_a$ and $\al_a$ are the fundamental weights and simple roots of the corresponding algebras,  $t_a=2$ for the short roots in types $BC$, and $t_a=1$ otherwise.
\end{conj}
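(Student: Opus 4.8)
The plan is to follow the strategy that worked so cleanly in the type $A$ case (Theorem \ref{KNAN}), exploiting the commutation relations of the conjectured quantum Q-systems rather than attempting a direct computation on the explicit difference operators. The starting point is the eigenvalue statement $M_{a,0}\,\Pi_\lambda = q^{c_a(\lambda,\omega_a)}\Pi_\lambda$ (with $c_a=2t_a$ in types $B,D$ and $c_a=t_a$ in type $C$), which one should regard as \emph{given}: it follows from the fact that each $M_{a,0}^{(\g)}$ is the $t\to\infty$ limit of a Macdonald/van Diejen operator whose monic common eigenfunctions are the $P_\lambda^{(\g)}$, so that $\Pi_\lambda^{(\g)}$ is an eigenfunction and the eigenvalue is read off from the explicit eigenvalue formulas quoted in Section \ref{diff_ops} (e.g.\ for $C_N$, $t^N e_1(\{t^{N+1-i}q^{2\lambda_i}\})$ in the $t\to\infty$ limit). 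The key observation is then that, up to scalar, the dual $q$-Whittaker function $\Pi_\lambda^{(\g)}$ is \emph{uniquely determined} by the full tuple of eigenvalues $\{q^{c_a(\lambda,\omega_a)}\}_{a=1}^N$ under the commuting family $\{M_{a,0}\}$, because these eigenvalues separate distinct dominant weights.

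First I would fix a short root index $b$ and apply the $q$-commutation relation from Conjecture \ref{qqconj} between $M_{a,0}$ and $M_{b,\pm 1}$ to the eigenfunction $\Pi_\lambda$. As in the type $A$ proof, the commutation relation $M_{a,0}\,M_{b,\pm 1}=q^{(\text{shift})}\,M_{b,\pm 1}\,M_{a,0}$ conjugates the eigenvalue, so that $M_{b,\pm 1}\Pi_\lambda$ is again a simultaneous eigenfunction of all the $M_{a,0}$, with eigenvalues matching those of $\Pi_{\lambda\pm\omega_b}$. The arithmetic here is the crux: one must verify that the exponent $\Lambda_{a,b}$ (resp.\ $\Lambda_{b,a}$) appearing in the commutation relation, combined with the base eigenvalue exponent $c_a(\lambda,\omega_a)$, reproduces exactly $c_a(\lambda\pm\omega_b,\omega_a)$, i.e.\ that the shift $\Lambda$-exponent equals $\pm c_a(\omega_b,\omega_a)$. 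Since $\lambda$ is proportional to $C^{-1}$ and $(\omega_b,\omega_a)$ is likewise expressible through the inverse Cartan matrix, this reduces to the normalization identities recorded in \eqref{lambda}; I would check these case-by-case for the three A-type chain ranges and then separately for the exceptional short/spinor nodes. By the separation-of-eigenvalues argument, $M_{b,\pm 1}\Pi_\lambda = c^{\pm}_{\lambda,b}\,\Pi_{\lambda\pm\omega_b}$ for some scalars.

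To pin down the scalar $c^{+}_{\lambda,b}$ I would use the triangularity of $\Pi_\lambda$ with respect to the dominance order, examining the dominant asymptotic regime $|x_1|\gg\cdots\gg|x_N|$ exactly as in Theorem \ref{KNAN}: here the leading term of $M_{b,1}$ (with all $\epsilon_i=+1$ in the explicit formulas of \eqref{Dtwo}, \eqref{Dthree}, \eqref{Cthree}) is a monomial times $\prod_{i\le b}\Gamma_i$, and matching leading monomials yields $c^{+}_{\lambda,b}=q^{c_b(\lambda,\omega_b)}$. For the lowering scalar $c^{-}_{\lambda,b}$ I would invoke the relevant exchange (mutation) relation of Conjecture \ref{qqconj} at $k=0$—for long roots the $A$-type relation $q^{2a}M_{a,1}M_{a,-1}=M_{a,0}^2-M_{a+1,0}M_{a-1,0}$, and for the short/spinor nodes the corresponding node-specific relations in \eqref{DMsys}, \eqref{BMsysthird}–\eqref{BMsyslast}, \eqref{CMsys}. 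Applying such a relation to $\Pi_\lambda$ and using the already-determined raising scalar together with the $M_{\bullet,0}$ eigenvalues forces $c^{-}_{\lambda,b}$ algebraically, and the factor $\bigl(1-q^{-c_b(\alpha_b,\lambda)}\bigr)$ should emerge from the difference of the two eigenvalue monomials on the right-hand side, using $(\omega_{a+1}+\omega_{a-1}-2\omega_a)=-\alpha_a$ on the A-chain and the analogous weight identities at the branch/short nodes.

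The main obstacle, I expect, is twofold. The deepest issue is that the whole argument is \emph{conditional on Conjecture \ref{qqconj}}: the commutation and exchange relations are themselves unproven (checked only numerically through $N=6$), so this is genuinely a derivation of Conjecture \ref{knconj} \emph{from} Conjecture \ref{qqconj} plus the eigenvalue input, not an unconditional theorem. Granting that, the real technical difficulty concentrates at the short root $\alpha_N$ of type $B$ and at the exceptional nodes generally, where the quantum Q-system has the doubled (parity-dependent) structure in $k$—note the split $M_{N,2k}$ versus $M_{N,2k+1}$ and the fact that $M_{N,k}^{(B_N)}$ is not even given a priori but defined through the Q-system itself. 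There the clean ``conjugate-the-eigenvalue, then match leading monomial'' mechanism must be re-examined: the relation relating $M_{N,1}$ to the others is quadratic (see \eqref{BMsyslast}), so extracting the action of the genuine raising operator, and verifying it lands on $\Pi_{\lambda+\omega_N}$ with the stated $t_N=2$ normalization, will require the most care. I would handle the A-chain nodes uniformly first, treat the spinor nodes of $D_N$ (which are symmetric and relatively benign) next, and confront the short $B_N$/$C_N$ nodes last, since those carry all the nontrivial parity bookkeeping.
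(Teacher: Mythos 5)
Your overall strategy is exactly the one the paper itself sketches in the discussion following Conjecture \ref{knconj}: take the eigenvalue equations as input from the finite-$t$ Macdonald/van Diejen theory, use the $q$-commutation relations of Conjecture \ref{qqconj} together with the identity $\lambda_{a,b}^{(\g)}=\epsilon^{(\g)}t_a^{(\g)}(\omega_a^{(\g)},\omega_b^{(\g)})$ and separation of simultaneous eigenvalues to conclude $M_{b,\pm1}\Pi_\lambda\propto\Pi_{\lambda\pm\omega_b}$, and then extract the lowering constant from the exchange relations at $k=0$. You also correctly identify that everything is conditional on Conjecture \ref{qqconj}.

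The one place you genuinely depart from the paper is the determination of the raising constant $c^+_{\lambda,b}=q^{\epsilon^{(\g)}t_b(\omega_b,\lambda)}$ by a leading-monomial argument in the regime $|x_1|\gg\cdots\gg|x_N|$, transplanted from Theorem \ref{KNAN}. The paper does \emph{not} do this: it explicitly states that this proportionality factor is precisely the remaining conjectural content, verified only numerically up to $N=6$. So you should be aware that this step is not a routine adaptation. In types $BCD$ the operators \eqref{Dtwo}, \eqref{Dthree}, \eqref{Cthree} are sums over sign vectors $\epsilon_i=\pm1$ acting on Weyl-invariant Laurent polynomials, and one must establish triangularity with respect to the relevant dominance order on the hyperoctahedral orbit of $\lambda$ and show that only the all-plus term contributes to the coefficient of $m_{\lambda+\omega_b}$ --- a statement that is plausible but requires an argument the type-$A$ proof does not supply. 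More seriously, for $a=N$ in type $B_N$ there is no explicit formula for $M_{N,1}$ at all: it is defined only implicitly through the quadratic relation \eqref{BMsyslast}, so the asymptotic expansion you propose cannot even be written down there, and the scalar would have to be extracted from the quadratic relation (which determines it only up to sign and consistency conditions). If you retreat to the paper's position --- proportionality is proved conditionally, the scalar is conjectural --- your write-up matches the paper; if you insist on deriving the scalar, that is where the real (and currently open) work lies.
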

The eigenvalue equations (the first equation in each set) are a consequence of the Macdonald 
eigenvalue equations of the finite $t$ case. As in type $A$, up to a scalar multiple,
the raising equations (the second equation in each set) are a consequence of the quantum commutation relations
$M_{a,0}^{(\g)}\, M_{b,1}^{(\g)}=q^{\lambda_{a,b}^{(\g)}}  \, M_{b,1}^{(\g)}\, M_{a,0}^{(\g)}$. When applied
to the $q$-Whittaker functions, this gives: 
$$M_{a,0}^{(\g)}\Big( M_{b,1}^{(\g)}\, \Pi_\lambda^{(\g)}\Big)
=q^{\lambda_{a,b}^{(\g)}}\, M_{b,1}^{(\g)}\, M_{a,0}^{(\g)}\,  \Pi_\lambda^{(\g)}=
q^{\epsilon^{(\g)} \!t_a^{(\g)}\!(\omega_a^{(\g)}\!\!\!,\lambda+\omega_b^{(\g)}\!)}\, M_{b,1}^{(\g)}\,  \Pi_\lambda^{(\g)}$$
where we have defined $\epsilon^{(\g)}=2$ for $\g=B_N,D_N$ and $\epsilon^{(\g)}=1$
for $\g=C_N$, and used the fact that $\lambda_{a,b}^{(\g)}=\epsilon^{(\g)}t_a^{(\g)}(\omega_a^{(\g)},\omega_b^{(\g)})$. As the simulataneous eigenvalue property for the action of all $M_{a,0}$ 
determines the eigenvectors uniquely, up to a scalar multiple, we deduce that $M_{b,1}^{(\g)}\, \Pi_\lambda^{(\g)}$ must be proportional
to $\Pi_{\lambda+\omega_b}^{(\g)}$. The conjecture concerns simply the explicit value of the proportionality factor $q^{\epsilon^{(\g)}t_a^{(\g)}(\omega_a^{(\g)}\!\!,\lambda)}$, which we checked numerically up to $N=6$.

Similarly, the action of the lowering operator (third equation in each set) follows from the eigenvalue and raising operator actions and the quantum Q-system equations (\ref{DMsys}-\ref{CMsys})
of Conjecture \ref{qqconj}.
This is readily seen by applying the exchange relations involving $M_{a,1}$ and $M_{a,-1}$ 
to $\Pi_\lambda$.  

Examples of q-Whittaker functions and raising/lowering difference operators are given in Appendix A.

\section{The graded character conjectures for types $BCD$}

The main Conjecture \ref{qqconj} is a necessary ingredient in proving the following
conjecture about the expression of the graded characters in terms of difference operators. Let $I_<$ be the subset labels of the short roots the Dynkin diagram in types BC. Recall that $t_a=2$ for $a\in I_<$, and $t_a=1$ otherwise.

\begin{conj}\label{gradconj} 
The graded characters of the tensor products of KR-modules in types $BCD$ can be expressed using the iterated action of the difference operators in types $BCD$ on the polynomial 1:
$$
\chi_{\bn}^{(\g)}(q^{-1};\bx)=q^{-\frac{1}{2}Q^{(\g)}(\bn)}\, \prod_{\ell=k}^1\left(  \prod_{a\in I} (M_{ a,t_a\ell}^{(\g)})^{n_{a,t_a\ell}} \prod_{a\in I_<} (M_{a, t_a\ell-1}^{(\g)})^{n_a,t_a\ell-1}\right)\ \cdot 1
$$
where $k$ is chosen large enough to cover all the non-zero $n$'s, and
$$Q^{(\g)}(\bn):=\sum_{a,b=1}^N\sum_{i,j\geq 1} n_{a,i}\,\frac{\lambda_{a,b}^{(\g)} }{t_a^{(\g)}}\,
{\rm min}(t_b^{(\g)} i,t_a^{(\g)} j) \, n_{b,j} -\sum_{a=1}^N \sum_{i\geq 1} i\,  \lambda_{a,a}^{(\g)}\, n_{a,i}
$$
\end{conj}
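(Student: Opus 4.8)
The plan is to transport the type $A$ argument (the graded-character theorem of \cite{DFK15}) to types $BCD$, taking Conjecture \ref{qqconj} as the algebraic input and leaving the representation-theoretic content to the established $q$-fermionic formulas. The backbone of the type $A$ proof is the commutative diagram of the Introduction: a linear functional $\phi$ from the (opposite, parameter $q^{-1}$) quantum Q-system algebra to symmetric functions of $\bx$, normalized by $\phi(1)=1$, which intertwines right multiplication by $\Q^*_{b,k'}$ with the action of the difference operator $M_{b,k'}$, i.e. $\phi(w\,\Q^*_{b,k'})=M_{b,k'}\bigl(\phi(w)\bigr)$ as soon as $k'$ is large enough that $w\,\Q^*_{b,k'}$ stays in the ordered (generalized Motzkin) regime. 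Granting this, an induction on the number of factors turns the fermionic/cluster expression $\chi^{(\g)}_\bn=\phi\bigl(\prod^{\rightarrow}(\Q^*_{a,k})^{n_{a,k}}\bigr)$ of \eqref{linfun} into the iterated action $\bigl(\prod^{\rightarrow}M^{n_{a,k}}_{a,k}\bigr)\cdot 1$, the operator ordering being forced by the arrow ordering (descending $\ell$, long-root layers $I$ before the short-root insertions indexed by $I_<$). So the first task is to construct $\phi$ and prove this intertwining in types $BCD$.

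To build $\phi$ I would start from the known $q$-graded fermionic character formulas for $BCD$ \cite{KR,HKOTY}, established in \cite{AK,qKR,Simon}, and reinterpret the fermionic sum as the evaluation of an ordered monomial in the $\Q^*_{a,k}$ under the canonical functional on the quantum cluster algebra, exactly as was done in type $A$ in \cite{DFK16}. The structural fact to verify is that tensoring by an extra factor $\KR_{k'\omega_b}$ corresponds, for $k'$ large, to appending $\Q^*_{b,k'}$ on the right; this is where the long/short-root bookkeeping of the ordering --- the parity splitting of the short root, already visible in the separate even/odd evolutions of $M^{(C_N)}_{1,k}$ and expected for $\al_N$ in $B_N$ --- must be matched against the quiver of Figure \ref{fig:CNquiv}. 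The intertwining $\phi(w\,\Q^*_{b,k'})=M_{b,k'}\phi(w)$ then follows from Conjecture \ref{qqconj} together with the rescaling of equations \eqref{Dch}--\eqref{Cch}, since $\rho\colon\Q_{a,k}\mapsto M_{a,k}$ (up to the indicated powers of $q$) is then a homomorphism of the renormalized quantum Q-system. As a consistency check of the base layer, the $\ell=1$, long-root-only specialization of the formula is exactly the action of the raising operators $M_{a,1}$ on $1=\Pi^{(\g)}_0$, i.e. Conjecture \ref{knconj}, producing the $q$-Whittaker functions attached to fundamental KR-modules.

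The remaining quantitative point is the Gaussian prefactor $q^{-\frac12 Q^{(\g)}(\bn)}$. It is assembled from two sources: the passage $q\mapsto q^{-1}$ to the opposite system, and the normal ordering needed to turn the $q$-commuting ordered product $\prod^{\rightarrow}(\Q^*)^{n}$ into a product of operators acting on $1$. Each transposition of factors $\Q^*_{a,i}$ and $\Q^*_{b,j}$ contributes a power of $q$ governed by the commutation relations of \eqref{DNqqsys}--\eqref{CNqqsys}, and the diagonal self-contributions produce the linear term; summing over all unordered pairs should reproduce $Q^{(\g)}(\bn)=\sum n_{a,i}\tfrac{\lambda^{(\g)}_{a,b}}{t_a}\min(t_b i,t_a j)\,n_{b,j}-\sum i\,\lambda^{(\g)}_{a,a}n_{a,i}$. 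I would verify this by induction on $|\bn|$, tracking how the scalar in the one-step recursion $\chi^{(\g)}_{\bn+\epsilon_{b,k'}}=(\text{scalar})\,M_{b,k'}\chi^{(\g)}_\bn$ accumulates; the $\min(t_b i,t_a j)$ and $1/t_a$ factors are the direct shadows of the $q\mapsto q^2$ rescaling on the $A$-type chain and of the short-root modifications of $\lambda^{(\g)}$ recorded in \eqref{lambda}.

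The hard part will be the intertwining claim of the first paragraph: constructing $\phi$ and proving that right insertion of $\Q^*_{b,k'}$ realizes tensoring by $\KR_{k'\omega_b}$ inside the $BCD$ fermionic formulas. In type $A$ this rested on an explicit quantum-determinant description of the $\Q_{a,k}$ \cite{DFK16}; for $BCD$ the order-$N$ operators $M^{(\g)}_{N,k}$ --- and in type $B_N$ the $M_{N,k}$, which are only defined implicitly through Conjecture \ref{qqconj} --- have no such clean determinantal formula, so one must either develop a van Diejen-type raising construction for them or argue indirectly through the representation-theoretic meaning of the fermionic sum. A secondary obstacle is to certify that the interleaving of the $I$ and $I_<$ layers prescribed in the statement is indeed the generalized-Motzkin ordering keeping every successive insertion in the large-$k'$ regime; this combinatorics is contained in \cite{Simon} but must be re-derived to accommodate the short-root parity splittings.
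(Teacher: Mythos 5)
The first thing to say is that the paper does not prove this statement: it is stated as Conjecture \ref{gradconj}, explicitly conditional on the (also unproven) Conjecture \ref{qqconj}, and the only support offered in the text is a pair of consistency checks --- the verification that the level-one specialization \eqref{dema} is compatible with the raising-operator actions of Conjecture \ref{knconj} via the computation of $Q^{(\g)}(\{n_b+\delta_{a,b}\})-Q^{(\g)}(\{n_b\})$, plus numerical experiments at small rank. So there is no proof to match yours against, and your proposal should be judged as a proof \emph{strategy}. As such it is a reasonable transcription of the type-$A$ roadmap, and your third paragraph on the prefactor $q^{-\frac12 Q^{(\g)}(\bn)}$ essentially reproduces the one-step bookkeeping the paper itself carries out. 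But the proposal does not close the argument, and the central step is closer to a restatement of the conjecture than a reduction of it.

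Concretely: the intertwining property $\phi(w\,\Q^*_{b,k'})=M^{(\g)}_{b,k'}\bigl(\phi(w)\bigr)$ that you place at the core of the first paragraph is, modulo the scalar normalization, \emph{equivalent} to Conjecture \ref{gradconj} itself once $\phi$ is the functional realizing the fermionic formulas of \cite{qKR,Simon}. Declaring that it "follows from Conjecture \ref{qqconj} together with the rescaling \eqref{Dch}--\eqref{Cch}" is not correct as stated: knowing that the operators $M^{(\g)}_{a,k}$ satisfy the renormalized quantum Q-system identifies them as \emph{a} representation of the algebra, but does not by itself identify the cyclic vector $1$ with the image of the empty tensor product under $\phi$, nor show that the operator representation computes the same linear functional as the fermionic sum. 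In type $A$ this identification was the hard analytic content of \cite{DFK15,DFK16} (constant-term/quantum-determinant arguments), and you correctly flag that the analogous input is missing in types $BCD$ --- especially for the order-$N$ operators, and for $M^{(B_N)}_{N,2k+1}$ which is only defined implicitly through the system. Until that input is supplied, and until Conjecture \ref{qqconj} itself is established, the proposal is a plan with the two decisive steps left open rather than a proof.
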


The ordering of the difference operators in Conjecture \ref{gradconj} is consistent with \cite{qKR,Simon}, and is determined by the order of mutations in the bipartite belt of the quantum cluster algebra.

In particular, Conjecture \ref{gradconj} allows one to interpret the  so-called  ``level 1" graded characters 
corresponding to only possibly non-zero $n_a=n_{a,1}$, as the limiting Macdonald polynomials (or dual $q$-Whittaker functions)
$\Pi_\lambda$, with the correspondence:
\begin{equation}\label{dema} \chi_{\{n_a\}}^{(\g)}(q^{-1},\bx)=q^{-\frac{1}{2}Q^{(\g)}(\{n_a\})}\, \prod_{a\in I_>} \left(M_{a,1}^{(\g)}\right)^{n_a}\prod_{a\in I_<} \left(M_{a,1}^{(\g)}\right)^{n_a}  \cdot 1 =\Pi_\lambda^{(\g)}(q^{-1};\bx)
\end{equation}
where $I_>=I\setminus I_<$ is the set of long root labels, $\lambda=\sum_a n_a\omega_a$, and
the quadratic form $Q^{(\g)}$ is
\begin{eqnarray*}
Q^{(\g)}(\{n_a\}) &=&\sum_{a,b=1}^N n_{a}\,\frac{\lambda_{a,b}^{(\g)}}{t_a^{(\g)}} \min(t_a^{(\g)},t_b^{(\g)})\,n_{b} -\sum_{a=1}^N  \lambda_{a,a}^{(\g)}\, n_{a} \\
&=&\epsilon^{(\g)}\left\{ \sum_{a,b=1}^N n_{a}\,(\omega_a^{(\g)},\omega_b^{(\g)}) \min(t_a^{(\g)},t_b^{(\g)})\,n_{b} -\sum_{a=1}^N t_a^{(\g)} (\omega_a^{(\g)},\omega_a^{(\g)})\, n_{a}  \right\}  .
\end{eqnarray*}
Alternatively, for a fixed $\bn$, acting with $M_{a,k}^{(\g)}$ such that $t_a k \geq \max\{t_b j: n_{b,j}>0\}$ on $\chi_\bn$ gives
$$
M_{a,k}^{(\g)}\,\chi_\bn^{(\g)}(q^{-1};\bx) = q^{\epsilon^{(\g)} \sum_{j,b} (\omega_a^{(\g)},\omega_b^{(\g)}) \min(t^{(\g)}_b k,t^{(\g)}_a j)n_{b,j}} \chi_{\bn+\epsilon_{a,k}}^{(\g)}(q^{-1};\bx),
$$
where, again, $\bn=\sum n_{b,j} \epsilon_{b,j}$.

It is a non-trivial exercise to check the compatibility between \eqref{dema} and the raising operator
conditions on the second lines of (\ref{DKN}-\ref{CKN}). Indeed, we note that for a short root label $a$,
we may only act with $M_{a,1}^{(\g)}$ on characters obtained themselves by short root raising operators
say $\prod_{b\in I_<} (M_{b,1}^{(\g)})^{n_b}\cdot 1$:
\begin{eqnarray*}M_{a,1}^{(\g)}\, \chi_{\{n_b\}}^{(\g)}(q^{-1},\bx)&=&q^{-\half Q^{(\g)}(\{n_b\})} \prod_{b\in I_<} 
(M_{b,1}^{(\g)})^{n_b+\delta_{a,b}}\cdot 1\\
&=&q^{\half(Q^{(\g)}(\{n_b+\delta_{a,b}\})-Q^{(\g)}(\{n_b\}))} \chi_{\{n_b+\delta_{a,b}\}}^{(\g)}(q^{-1},\bx)
\end{eqnarray*}
We compute:
\begin{eqnarray*}Q^{(\g)}(\{n_b+\delta_{a,b}\})-Q^{(\g)}(\{n_b\})&=&\epsilon^{(\g)} \sum_{b\in I_<} n_b \min(t_a^{(\g)},t_b^{(\g)}) (\omega_a^{(\g)},\omega_b^{(\g)})\\
&=&
\epsilon^{(\g)}\, t_a^{(\g)} (\omega_a^{(\g)},\sum_b n_b \omega_b^{(\g)})=\epsilon^{(\g)}\, t_a^{(\g)} (\omega_a^{(\g)},\lambda)
\end{eqnarray*}
by use of $t_a^{(\g)}=t_b^{(\g)}=2$. This is
in agreement with (\ref{DKN}-\ref{CKN}), upon writing $\lambda=\sum n_b \omega_b$.
For a long root label $a$, we have similarly:
\begin{eqnarray*}Q^{(\g)}(\{n_b+\delta_{a,b}\})-Q^{(\g)}(\{n_b\})&=&\epsilon^{(\g)} \sum_{b\in I} n_b \min(t_a^{(\g)},t_b^{(\g)}) (\omega_a^{(\g)},\omega_b^{(\g)})\\
&=&
\epsilon^{(\g)}\,  (\omega_a^{(\g)},\sum_b n_b \omega_b^{(\g)})=\epsilon^{(\g)}\, t_a^{(\g)} (\omega_a,\lambda)
\end{eqnarray*}
by use of $t_a^{(\g)}=1$. This is again in agreement with (\ref{DKN}-\ref{CKN}).

This special case is also in agreement with \cite{LNSSS}.

\section{Conclusion}

\subsection{Summary: Macdonald operators and quantum cluster algebra}
In this paper we have presented two main conjectures about the difference operators $M_{a,k}^{(\g)}$ corresponding to  root systems of  types $BCD$. 

The first conjecture states that the difference operators $\{M_{a,k}^{(\g)}\}$ satisfy  renormalized  quantum Q-systems of types $BCD$. The commuting difference operators $\{M_{a,0}^{(\g)}\}$ can be obtained as the limits $t\to \infty$  of appropriate Macdonald operators for the relevant Lie root systems.  The operators $M_{a,k}^{(\g)}$ are their $SL_2(\Z)$ ``discrete time evolution".

The second conjecture is that the  difference operators $M_{a,\pm 1}^{(\g)}$
at times $k=\pm 1$ act as 
raising and lowering operators on $q$-Whittaker functions  $\Pi_\lambda^{(\g)}$.

The quantum Q-systems are mutations in the corresponding quantum cluster algebras.
From this point of view, the sets $S_{\pm}:=\{ M_{a,0}^{(\g)},M_{a,\pm 1}^{(\g)}\}$
are two possible valid initial cluster seeds, and the conjectures state that, as in type $A$, these are formed
of the Macdonald and raising (reps. lowering) operators, at $t\to \infty$. This raises a number of questions
regarding cluster variables in general: from preliminary inspection, 
it appears that {\it all cluster
variables} in the corresponding cluster algebra are difference operators as well, a quite surprising property
which goes way beyond the usual Laurent property of quantum cluster algebra, which would only imply that
all cluster variables are Laurent polynomials of those in an initial cluster. These other difference operators should tell us something new about Macdonald theory (including in type $A$).

\subsection{Towards proving the conjectures}

The second conjecture of this paper offers a possible strategy for proving both conjectures which goes as follows. The second conjecture indeed can be restated as follows: the Macdonald and raising operators $M_{a,0}^{(\g)}$ and $M_{a,1}^{(\g)}$
act on the basis $\Pi_\lambda$ of Weyl-symmetric polynomials of $\bx$ as very simple {\it dual} 
operators $X_a,P_a$:
$$M_{a,0}^{(\g)}\, \Pi_\lambda^{(\g)}=  \Pi_\lambda^{(\g)} \, X_a,
\quad M_{a,1}^{(\g)}\, \Pi_\lambda^{(\g)}=  \Pi_\lambda^{(\g)} \, P_a $$
where the operators $X_a,P_a$ act to the left on the variables $\Lambda_a=q^{2\lambda_a}$.
More precisely the operators $X_a$ acts diagonally on the basis $\Pi_\lambda^{(\g)}$ with eigenvalues $q^{\epsilon_a^{(\g)}(\omega_a^{(\g)}\!\!,\lambda)}$ whereas $P_a$ includes a 
multiplicative shift of $\Lambda$ variables. As a result, the operators $X_a,P_a$ obey the simple (opposite)  commutation relations:  $X_a \, P_b=q^{-\lambda_{a,b}^{(\g)}}\,P_b\, X_a$.

Reversing the logic entirely, we may {\it start from the data} of operators $X_a,P_a$, and {\it define} the left action
of the operators $M_{a,k}^{(\g)}$ via the (opposite)
renormalized quantum Q-systems of this paper. To identify the left action and the right one, we need
to construct the Gaussian operator in $X_a,P_a$ variables, and the conjectures will follow from the anti-homomorphism mapping left and right actions. We will pursue this program elsewhere.

\subsection{sDAHA, EHA and the $t$-deformation of quantum Q-systems}

In \cite{DFKqt}, we investigated the natural $t$-deformation of the $A$ type quantum Q-system provided
by the type $A$ sDAHA, also expressed as a quotient of the elliptic Hall algebra (EHA) \cite{SCHIFFEHA}, or the
quantum toroidal algebra of ${\gl}_1$. There exist sDAHAs for all classical types \cite{Cheredbook}, and this would be the natural candidate for a generalization of the $A$ type results.
We may use as starting points the $(q,t)$-Macdonald operators of Section \ref{diff_ops}, and their time evolution via the suitable $SL_2(\Z)$ action, to derive current algebra relations that will generalize
EHA or quantum toroidal algebra relations, ideally giving rise to new interesting algebras.

\begin{appendix}
\section{Examples}

This appendix gathers a few inter-connected examples, which give non-trivial checks of the conjectures of this paper.
We give here the examples of the $B_2$ and $C_2$ cases, that of the $A_3$ and $D_3$,
and finally the $D_4$ case. We verify in particular that the symmetries of the Dynkin diagrams are also reflected on our difference operators.

\subsection{Weyl-invariant Schur functions}

\begin{defn}
We define the Weyl-invariant Schur functions 
$s^{({\mathfrak g})}_\lambda(\bx)$ for ${\mathfrak g}=A_{N-1},B_N,C_N,D_N$ to be the characters of the irreducible representations
of corresponding weight \cite{Mizu}:\footnote{In \cite{Mizu}, the author restricts the definition to $\lambda$'s that are actual partitions. Here we include all possible dominant weights.}:
\begin{equation}
 s^{(A_{N-1})}_\lambda(\bx)=
\frac{\det_{1\leq i,j\leq N}\left( x_i^{N-j+\lambda_j}\right)}
{\det_{1\leq i,j\leq N}\left(  x_i^{N-j}\right)} \label{ANschur}
\end{equation}
\begin{equation}
s^{(B_N)}_\lambda(\bx)=
\frac{\det_{1\leq i,j\leq N}\left( x_i^{N-j+\lambda_j+\frac{1}{2}}-\frac{1}{x_i^{N-j+\lambda_j+\frac{1}{2}}}\right)}
{\det_{1\leq i,j\leq N}\left(  x_i^{N-j+\frac{1}{2}}-\frac{1}{x_i^{N-j+\frac{1}{2}}}\right)} \label{BNschur} 
\end{equation}
\begin{equation}
 s^{(C_N)}_\lambda(\bx)=\frac{\det_{1\leq i,j\leq N}\left( x_i^{N-j+\lambda_j+1}-\frac{1}{x_i^{N-j+\lambda_j+1}}\right)}
{\det_{1\leq i,j\leq N}\left(  x_i^{N-j+1}-\frac{1}{x_i^{N-j+1}}\right)} \label{CNschur}
\end{equation}
\begin{equation}
 s^{(D_N)}_\lambda(\bx)= 
\frac{\det_{1\leq i,j\leq N}\left( x_i^{N-j+\lambda_j}-\frac{1}{x_i^{N-j+\lambda_j}}\right)}
{\det_{1\leq i,j\leq N}\left(  x_i^{N-j}+\frac{1}{x_i^{N-j}}\right)}
+\frac{\det_{1\leq i,j\leq N}\left( x_i^{N-j+\lambda_j}+\frac{1}{x_i^{N-j+\lambda_j}}\right)}
{\det_{1\leq i,j\leq N}\left(  x_i^{N-j}+\frac{1}{x_i^{N-j}}\right)}\label{DNschur} 
\end{equation}
\end{defn}

The Weyl-invariant Schur functions form a basis of the space of Weyl-invariant (Laurent) polynomials.
In the following sections, we write the dual $q$-Whittaker functions $\Pi_\lambda$ in this Schur basis. We drop the superscript $({\mathfrak g})$ for simplicity.

\begin{remark}
Note that  Equation \eqref{ANschur} for ${\mathfrak gl}_N$ Schur functions
gives the ${\mathfrak sl}_N$ Schur functions, upon restriction to $x_1x_2\cdots x_N=1$ and 
then noting that $s_\lambda=s_{\lambda-\rho}$ where $\rho=e_1+e_2+\cdots +e_N$.
\end{remark}

\subsection{The $B_2$ case}

\subsubsection{M operators}

\begin{eqnarray*}M_{1,k}&=&\sum_{\epsilon=\pm 1}
x_1^{k \epsilon}\, \frac{x_1^\epsilon}{x_1^\epsilon- 1} 
\frac{x_1^\epsilon x_2}{x_1^\epsilon x_2 - 1}\,  \frac{x_1^\epsilon}{x_1^\epsilon - x_2} 
\, \Gamma_1^{2\epsilon}
+x_2^{k \epsilon}\, \frac{x_2^\epsilon}{x_2^\epsilon- 1} 
\frac{x_2^\epsilon x_1}{x_2^\epsilon x_1 - 1}\,  \frac{x_2^\epsilon}{
 x_2^\epsilon - x_1} \, \Gamma_2^{2\epsilon}\\
M_{2,2k}&=&q^{-2k}+\sum_{\epsilon_1,\epsilon_2=\pm 1}
\frac{x_1^{\epsilon_1} }{x_1^{\epsilon_1}-1}\, \frac{x_2^{\epsilon_2} }{x_2^{\epsilon_2}-1}\, \frac{x_1^{\epsilon_1}x_2^{\epsilon_2} }{x_1^{\epsilon_1}x_2^{\epsilon_2}-1}\,
\frac{q^2x_1^{\epsilon_1}x_2^{\epsilon_2} }{q^2x_1^{\epsilon_1}x_2^{\epsilon_2}-1}
(x_1^{k\epsilon_1}x_2^{k\epsilon_2}\,\Gamma_1^{2\epsilon_1}\Gamma_2^{2\epsilon_2}-q^{-2k})\\
M_{2,2k-1}&=&\sum_{\epsilon_1,\epsilon_2=\pm 1}
\frac{x_1^{\epsilon_1} }{x_1^{\epsilon_1}-1}\, \frac{x_2^{\epsilon_2} }{x_2^{\epsilon_2}-1}\, \frac{x_1^{\epsilon_1}x_2^{\epsilon_2} }{x_1^{\epsilon_1}x_2^{\epsilon_2}-1}\,
\frac{q^2x_1^{\epsilon_1}x_2^{\epsilon_2} }{q^2x_1^{\epsilon_1}x_2^{\epsilon_2}-1}
(x_1^{\epsilon_1}x_2^{\epsilon_2})^{-\frac{1}{2}}\, (x_1^{k\epsilon_1}x_2^{k\epsilon_2}\,\Gamma_1^{2\epsilon_1}\Gamma_2^{2\epsilon_2}-q^{-2k})
\end{eqnarray*}
Here $M_{1,k}$ is given by the $k$-th iterate conjugation of \eqref{macBN} w.r.t. the Gaussian, and
$M_{2,2k}$ by the quantum determinant $M_{1,k}^2-q^2M_{1,k+1}M_{1,k-1}$.

\subsubsection{Dual $q$-Whittaker functions}

\begin{eqnarray*}
\Pi_{0,0}&=& s_{0,0}\\
\Pi_{1,0}&=& s_{1,0} \\
\Pi_{2,0}&=& s_{2,0}+q^{-2}s_{1,1}+q^{-4}s_{0,0} \\
\Pi_{1,1}&=& s_{1,1}+q^{-2}s_{1,0}+q^{-2}s_{0,0} \\
\Pi_{3,0}&=& s_{3,0}+\frac{1+q^2}{q^4} s_{2,1}+q^{-6}s_{1,1}+\frac{1+q^2+q^4}{q^8}s_{1,0}\\
\Pi_{2,1}&=& s_{2,1}+q^{-2}s_{2,0}+\frac{1+q^2}{q^4} s_{1,1}+\frac{1+q^2}{q^4} s_{1,0}+q^{-6}s_{0,0} \\
\Pi_{\frac{1}{2},\frac{1}{2}}&=& s_{\frac{1}{2},\frac{1}{2}}\\
\end{eqnarray*}
\begin{eqnarray*}
\Pi_{\frac{3}{2},\frac{1}{2}}&=& s_{\frac{3}{2},\frac{1}{2}}+q^{-2}s_{\frac{1}{2},\frac{1}{2}}\\
\Pi_{\frac{5}{2},\frac{1}{2}}&=& s_{\frac{5}{2},\frac{1}{2}}+q^{-2}s_{\frac{3}{2},\frac{3}{2}}+\frac{1+q^2}{q^4} s_{\frac{3}{2},\frac{1}{2}}+
\frac{1+q^2}{q^6}s_{\frac{1}{2},\frac{1}{2}}\\
\Pi_{\frac{3}{2},\frac{3}{2}}&=& s_{\frac{3}{2},\frac{3}{2}}+\frac{1+q^2}{q^4} s_{\frac{3}{2},\frac{1}{2}}+
\frac{1+q^2+q^4}{q^6}s_{\frac{1}{2},\frac{1}{2}}
\end{eqnarray*}

\subsection{The $C_2$ case}

\subsubsection{The $B_2\leftrightarrow C_2$ symmetry}

The $B_2$ case can be mapped onto the $C_2$ case, by interchanging the roles of the two fundamental weights. More precisely, let us denote by $\omega_a=\omega_a^{(B_2)}$ (resp. $\omega_a'=\omega_a^{(C_2)}$). The variables $x_1,x_2$ of the $B_2$ case can be thought of as
$x_i=e^{e_i}$, with  $e_1=\omega_1$ and $e_2=2\omega_2-\omega_1$,
and similarly for $C_2$, where $x_i'=e^{e_i'}$, $e_1'=\omega_1'$ and $e_2'=\omega_2'-\omega_1'$. The mapping 
$(\omega_1,\omega_2)\mapsto (\omega_2',\omega_1')$ sends
\begin{equation}\label{mapx} x_1=e^{\omega_1}\mapsto e^{\omega_2'}=x_1'\, x_2', \quad x_2=e^{2\omega_2-\omega_1}\mapsto e^{2\omega_1'-\omega_2'}=\frac{x_1'}{x_2'}
\end{equation}
Similarly, we have:
\begin{equation}\label{mapga} \Gamma_1^2 \mapsto \Gamma_1'\, \Gamma_2',\qquad \Gamma_2^2\mapsto \Gamma_1'\, {\Gamma_2'}^{-1}
\end{equation}
The map (\ref{mapx}-\ref{mapga}) sends the operators $M_{a,k}^{(B_2)}\mapsto M_{3-a,k}^{(C_2)}$ for $a=1,2$ and $k\in \Z$,
and the Macdonald polynomials $P_{\lambda_1,\lambda_2}^{(B_2)}\mapsto P_{\lambda_1+\lambda_2,\lambda_1-\lambda_2}^{(C_2)}$ and similarly for the $q$-Whittaker functions,
using
$$ \lambda=\lambda_1 e_1+\lambda_2 e_2=(\lambda_1-\lambda_2)\omega_1+2\lambda_2 \omega_2\mapsto 
(\lambda_1-\lambda_2)\omega_2'+2\lambda_2 \omega_1'=(\lambda_1+\lambda_2)e_1'+(\lambda_1-\lambda_2)e_2' .$$

\subsubsection{M operators}
From the definitions in type $C$,
\begin{eqnarray*}
M_{1,2k}&=&q^{-2k}+\sum_{\epsilon=\pm 1} \left\{
\frac{x_1^{2\epsilon}}{x_1^{2\epsilon}-1}\frac{q^2x_1^{2\epsilon}}{q^2x_1^{2\epsilon}-1} 
\frac{x_1^\epsilon x_2}{x_1^\epsilon x_2 - 1}\frac{x_1^\epsilon }{x_1^\epsilon -x_2 }
(x_1^{2k \epsilon} \Gamma_1^{2\epsilon} -q^{-2k}) \right.\\
&&\qquad\qquad +\left. \frac{x_2^{2\epsilon}}{x_2^{2\epsilon}-1}\frac{q^2x_2^{2\epsilon}}{q^2x_2^{2\epsilon}-1} 
\frac{x_2^\epsilon x_1}{x_2^\epsilon x_1 - 1}\frac{x_2^\epsilon }{x_2^\epsilon -x_1 }
(x_2^{2k \epsilon} \Gamma_2^{2\epsilon} -q^{-2k})\right\},\\
M_{1,2k-1}&=&\sum_{\epsilon=\pm 1} \left\{
\frac{x_1^{2\epsilon}}{x_1^{2\epsilon}-1}\frac{q^2x_1^{2\epsilon}}{q^2x_1^{2\epsilon}-1} 
\frac{x_1^\epsilon x_2}{x_1^\epsilon x_2 - 1}\frac{x_1^\epsilon }{x_1^\epsilon -x_2 }\, x_1^{-\epsilon}
(x_1^{2k \epsilon} \Gamma_1^{2\epsilon} -q^{-2k})\right.\\
&&\qquad+\left. \frac{x_2^{2\epsilon}}{x_2^{2\epsilon}-1}\frac{q^2x_2^{2\epsilon}}{q^2x_2^{2\epsilon}-1} 
\frac{x_2^\epsilon x_1}{x_2^\epsilon x_1 - 1}\frac{x_2^\epsilon }{x_2^\epsilon -x_1 }\, x_2^{-\epsilon}
(x_2^{2k \epsilon} \Gamma_2^{2\epsilon} -q^{-2k})\right\},\\
M_{2,k}&=& \sum_{\epsilon_1,\epsilon_2=\pm 1} \frac{x_1^{2\epsilon_1}}{x_1^{2\epsilon_1}-1}\frac{x_2^{2\epsilon_2}}{x_2^{2\epsilon_2}-1}
\frac{x_1^{\epsilon_1}x_2^{\epsilon_2} }{x_1^{\epsilon_1}x_2^{\epsilon_2}-1}\, 
(x_1^{k\epsilon_1} x_2^{k\epsilon_2})\, \Gamma_1^{\epsilon_1}\Gamma_2^{\epsilon_2}.
\end{eqnarray*}

The expected symmetry between $M_{1,k}^{(B_2)}\mapsto M_{2,k}^{(C_2)}$ and 
$M_{2,k}^{(B_2)}\mapsto M_{1,k}^{(C_2)}$ is easily checked.

\subsubsection{Dual $q$-Whittaker functions}

\begin{eqnarray*}
\Pi_{0,0}&=& s_{0,0}\\
\Pi_{1,0}&=& s_{1,0} \\
\Pi_{2,0}&=& s_{2,0}+q^{-2}s_{1,1}+q^{-2}s_{0,0} \\
\Pi_{1,1}&=& s_{1,1} \\
\Pi_{3,0}&=& s_{3,0}+\frac{1+q^2}{q^4} s_{2,1}+\frac{1+q^2+q^4}{q^6}s_{1,0}\\
\Pi_{2,1}&=& s_{2,1}+q^{-2}s_{2,0}+q^{-2} s_{1,0}\\
\Pi_{4,0}&=& s_{4,0}+\frac{1+q^2+q^4}{q^6} s_{3,1}+\frac{1+q^4}{q^8}s_{2,2}+\frac{(1+q^2+q^4)(1+q^4)}{q^{10}}s_{2,0}\\
&&+\frac{1+2q^2+q^4+q^6}{q^{10}}s_{1,1}+\frac{1+q^4+q^8}{q^{12}}s_{0,0}\\
\Pi_{3,1}&=& s_{3,1}+q^{-2}s_{2,2}+\frac{(1+q^2)}{q^{4}}s_{2,0}+\frac{1+q^2}{q^{4}}s_{1,1}+q^{-6}s_{0,0}\\
\Pi_{2,2}&=&s_{2,2}+q^{-2}s_{2,0}+q^{-4}s_{0,0}
\end{eqnarray*}

The expected symmetry relations between $B_2$ and $C_2$ $q$-Whittaker functions are easily checked, using the explicit expressions for the relevant Schur functions (\ref{BNschur}-\ref{CNschur}).

\subsection{The ${\mathfrak sl}_4$ and $D_3$ cases}

\subsubsection{The ${\mathfrak sl}_4\leftrightarrow D_3$ symmetry}
Compared to the ${\mathfrak gl}_4$ case, the symmetric functions of the case ${\mathfrak sl}_4$ involve the extra condition that $x_1x_2x_3x_4=1$, and accordingly $\Gamma_1\Gamma_2\Gamma_3\Gamma_4=1$
(see Remark \ref{glvssl}).
This is implemented by imposing the extra condition $e_1+e_2+e_3+e_4=0$ under which:
$$ e_1=\omega_1,\ e_2=\omega_2-\omega_1,\ e_3=\omega_3-\omega_2,\ e_4=-\omega_3 .$$
Primed variables are used for $D_3$:
$$ e_1'=\omega_1',\ e_2'=\omega_2'+\omega_3'-\omega_1',\ e_3'=\omega_3'-\omega_2' .$$
We use the mapping
$$\omega_1\mapsto \omega_3', \quad \omega_2\mapsto \omega_1', \quad\omega_3\mapsto \omega_1' .$$
This is equivalent to the changes of variables (using $x_i=e^{e_i}, x_i'=e^{e_i'}$):
\begin{equation}\label{sl4tod3} x_1\mapsto \sqrt{x_1'x_2'x_3'},\quad x_2\mapsto \sqrt{\frac{x_1'}{x_2'x_3'}},\quad x_3\mapsto\sqrt{\frac{x_2'}{x_1'x_3'}},\quad x_4\mapsto \sqrt{\frac{x_3'}{x_1'x_2'}} .
\end{equation}
Moreover, to account for our choice of normalization we must also take $q\mapsto q^2$, which results
in
\begin{equation}\label{sl4toD3} \Gamma_1 \mapsto \Gamma_1'\Gamma_2'\Gamma_3',\quad \Gamma_2\mapsto  \frac{\Gamma_1'}{\Gamma_2'\Gamma_3'}, \quad \Gamma_3\mapsto  \frac{\Gamma_2'}{\Gamma_1'\Gamma_3'}, \quad \Gamma_4\mapsto  \frac{\Gamma_3'}{\Gamma_1'\Gamma_2'} .
\end{equation}

The above transformations send the ${\mathfrak sl}_4$ Macdonald operators to the $D_3$ ones,
namely $M_{1,k}\mapsto M_{3,k}'$, $M_{2,k}\mapsto M_{1,k}'$ and $M_{3,k}\mapsto M_{2,k}'$.
The corresponding mapping of Macdonald polynomials is
\begin{equation}
\label{maPsl4tod3}
P_{\lambda_1,\lambda_2,\lambda_3,\lambda_4}^{({\mathfrak sl}_4)}\mapsto P_{\frac{\lambda_1+\lambda_2-\lambda_3-\lambda_4}{2},\frac{\lambda_1-\lambda_2+\lambda_3-\lambda_4}{2},\frac{\lambda_1-\lambda_2-\lambda_3+\lambda_4}{2}}^{(D_3)} .
\end{equation}

\subsubsection{M operators}

The ${\mathfrak sl}_4$ operators are:
\begin{eqnarray*}
M_{1,k}^{({\mathfrak sl}_4)}&=&\sum_{i=1}^4 x_i^k\, \prod_{j\neq i} \frac{x_i}{x_i-x_j} \Gamma_i \\
M_{2,k}^{({\mathfrak sl}_4)}&=&\sum_{1\leq i<j\leq 4}^4 (x_i x_j)^k\, \prod_{m\neq i,j}  \frac{x_m}{x_m-x_i}\frac{x_m}{x_m-x_j}\, \Gamma_i\Gamma_j\\
M_{3,k}^{({\mathfrak sl}_4)}&=&\sum_{i=1}^4 x_i^{-k}\, \prod_{j\neq i} \frac{x_j}{x_j-x_i} \Gamma_i^{-1}
\end{eqnarray*}
where $x_1x_2x_3x_4=1$ and $\Gamma_1\Gamma_2\Gamma_3\Gamma_4=1$.

The $D_3$ operators are:
\begin{eqnarray*}
M_{1,k}^{(D_3)}&=&\sum_{\epsilon=\pm 1} \sum_{i=1}^3 x_i^k\, \prod_{j\neq i}\frac{x_i^\epsilon x_j}{x_i^\epsilon x_j-1} \frac{x_i^\epsilon}{x_i^\epsilon-x_j}\Gamma_i^{2\epsilon} \\
M_{2,k}^{(D_3)}&=&\sum_{\epsilon_1,\epsilon_2,\epsilon_3=\pm 1\atop
\epsilon_1\epsilon_2\epsilon_3=-1} \prod_{i=1}^3 x_i^{k\epsilon_i}\, \prod_{1\leq i<j\leq 3} \frac{x_i^{\epsilon_i} x_j^{\epsilon_j}}{x_i^{\epsilon_i} x_j^{\epsilon_j}-1} \prod_{i=1}^3 \Gamma_i^{\epsilon_i}\\
M_{3,k}^{(D_3)}&=&\sum_{\epsilon_1,\epsilon_2,\epsilon_3=\pm 1\atop
\epsilon_1\epsilon_2\epsilon_3=1} \prod_{i=1}^3 x_i^{k\epsilon_i}\, \prod_{1\leq i<j\leq 3} \frac{x_i^{\epsilon_i} x_j^{\epsilon_j}}{x_i^{\epsilon_i} x_j^{\epsilon_j}-1} \prod_{i=1}^3 \Gamma_i^{\epsilon_i}
\end{eqnarray*}

It is straightforward to see that the change of variables (\ref{sl4tod3}-\ref{sl4toD3}) map the M operators as follows:  $M_{1,k}^{({\mathfrak sl}_4)}\mapsto M_{3,k}^{(D_3)}$, $M_{2,k}^{({\mathfrak sl}_4)}\mapsto M_{1,k}^{(D_3)}$ and $M_{3,k}^{({\mathfrak sl}_4)}\mapsto M_{2,k}^{(D_3)}$.

\subsubsection{Dual $q$-Whittaker functions}

In terms of the Schur functions \eqref{ANschur}, the first few $A_3$ dual $q$-Whittaker functions are:
\begin{eqnarray*}
\Pi_{0,0,0,0}&=&1\\
\Pi_{1,0,0,0}&=&s_{1,0,0,0}\\
\Pi_{2,0,0,0}&=&s_{2,0,0,0}+q^{-1}\, s_{1,1,0,0}\\
\Pi_{1,1,0,0}&=&s_{1,1,0,0}\\
\Pi_{3,0,0,0}&=&s_{3,0,0,0}+\frac{1+q}{q^2}\, s_{2,1,0,0}+ q^{-3}\, s_{1,1,1,0}\\
\Pi_{2,1,0,0}&=&s_{2,1,0,0}+q^{-1}\, s_{1,1,1,0}\\
\Pi_{1,1,1,0}&=&s_{1,1,1,0}\\
\Pi_{4,0,0,0}&=&s_{4,0,0,0}+\frac{1+q+q^2}{q^3}s_{3,1,0,0}+\frac{1+q^2}{q^4}s_{2,2,0,0}+\frac{1+q+q^2}{q^5}s_{2,1,1,0}+q^{-6}s_{1,1,1,1}\\
\Pi_{3,1,0,0}&=&s_{3,1,0,0}+q^{-1}s_{2,2,0,0}+\frac{1+q}{q^2}s_{2,1,1,0}+q^{-3}s_{1,1,1,1}\\
\Pi_{2,2,0,0}&=&s_{2,2,0,0}+q^{-1}s_{2,1,1,0}+q^{-2}s_{1,1,1,1}\\
\Pi_{2,1,1,0}&=&s_{2,1,1,0}+q^{-1}s_{1,1,1,1}\\
\Pi_{1,1,1,1}&=&s_{1,1,1,1}
\end{eqnarray*}

In terms of the Schur functions \eqref{DNschur}, the first few $D_3$ dual $q$-Whittaker functions are:
\begin{eqnarray*}
\Pi_{0,0,0}&=&1\\
\Pi_{1,0,0}&=&s_{1,0,0}\\
\Pi_{2,0,0}&=&s_{2,0,0}+q^{-2}\, s_{1,1,0}+q^{-4}\,s_{0,0,0}\\
\Pi_{1,1,0}&=&s_{1,1,0}+q^{-2}\,s_{0,0,0}\\
\Pi_{3,0,0}&=&s_{3,0,0}+\frac{1+q^2}{q^4}\, s_{2,1,0}+ q^{-6}\, (s_{1,1,1}+s_{1,1,-1})+\frac{1+q^2+q^4}{q^8}\,s_{1,0,0}\\
\Pi_{2,1,0}&=&s_{2,1,0}+q^{-2}\, (s_{1,1,1}+s_{1,1,-1})+\frac{1+q^2}{q^4}\,s_{1,0,0}\\
\Pi_{1,1,\epsilon}&=&s_{1,1,\epsilon}+q^{-2}\, s_{1,0,0}\\
\Pi_{\frac{1}{2},\frac{1}{2},\frac{\epsilon}{2}}&=&s_{\frac{1}{2},\frac{1}{2},\frac{\epsilon}{2}}\\
\Pi_{\frac{3}{2},\frac{1}{2},\frac{\epsilon}{2}}&=&s_{\frac{3}{2},\frac{1}{2},\frac{\epsilon}{2}}+q^{-2}\,
s_{\frac{1}{2},\frac{1}{2},-\frac{\epsilon}{2}}\\
\Pi_{\frac{5}{2},\frac{1}{2},\frac{\epsilon}{2}}&=&s_{\frac{5}{2},\frac{1}{2},\frac{\epsilon}{2}}+q^{-2}\,s_{\frac{3}{2},\frac{3}{2},\frac{\epsilon}{2}}+\frac{1+q^2}{q^4}\,
s_{\frac{3}{2},\frac{1}{2},-\frac{\epsilon}{2}}+\frac{1+q^2}{q^6}\,s_{\frac{1}{2},\frac{1}{2},\frac{\epsilon}{2}}\\
\end{eqnarray*}
\begin{eqnarray*}
\Pi_{\frac{3}{2},\frac{3}{2},\frac{\epsilon}{2}}&=&s_{\frac{3}{2},\frac{3}{2},\frac{\epsilon}{2}}+q^{-2}\,
s_{\frac{3}{2},\frac{1}{2},-\frac{\epsilon}{2}}+\frac{1+q^2}{q^4}\,s_{\frac{1}{2},\frac{1}{2},\frac{\epsilon}{2}}\\
\Pi_{\frac{3}{2},\frac{3}{2},\frac{3\epsilon}{2}}&=&s_{\frac{3}{2},\frac{3}{2},\frac{3\epsilon}{2}}+\frac{1+q^2}{q^4}\, s_{\frac{3}{2},\frac{1}{2},\frac{\epsilon}{2}}+q^{-6}\, s_{\frac{1}{2},\frac{1}{2},-\frac{\epsilon}{2}}
\end{eqnarray*}

The relations \eqref{maPsl4tod3} are easily checked for these polynomials. First we note that
the mapping \eqref{maPsl4tod3} extends to the Weyl-invariant Schur functions as well. We then simply have to
check that the coefficients in the $D_3$ case match those in the ${\mathfrak sl}_4$ case up to $q\to q^2$.
For instance, we have $\Pi_{2,1,0,0}^{({\mathfrak sl}_4)}\mapsto \Pi_{\frac{3}{2},\frac{1}{2},\frac{1}{2}}^{(D_3)}$
as the coefficient of $s_{1,1,1,0}^{({\mathfrak sl}_4)}$ (resp. $s_{\frac{1}{2},\frac{1}{2},-\frac{\epsilon}{2}}^{(D_3)}$) is $q^{-1}$ (resp. $q^{-2}$). Similarly the coefficients in $\Pi_{3,0,0,0}^{({\mathfrak sl}_4)}$ and $\Pi_{\frac{3}{2},\frac{3}{2},\frac{3}{2}}^{(D_3)}$ also agree up to $q\to q^2$.

\subsection{The $D_4$ case}

\subsubsection{Symmetries}

There are two simple symmetries of the Dynkin diagram which induce symmetries of the Macdonald operators and polynomials:
\begin{itemize}
\item 
The $\Z_2$ automorphism of the Dynkin diagram under which $1\to 1$, $2\to 2$ and $3\leftrightarrow 4$,
\item
The $\Z_3$ automorphism of the Dynkin diagram under which $1\to 3\to 4\to 1$ and $2\to 2$. 
\end{itemize}
Recall the relations
$$e_1=\omega_1,\ e_2=\omega_2-\omega_1,\ e_3=\omega_4+\omega_3-\omega_2,\ e_4=\omega_4-\omega_3 .$$
The $\Z_2$ symmetry under:
$$ \omega_1\mapsto \omega_1,\ \omega_2\mapsto \omega_2,\ \omega_3\mapsto \omega_4,\ 
\omega_4\mapsto \omega_3 $$
induces the transformations
\begin{equation}
\label{mapxDtwo}
x_1\mapsto x_1,\ x_2\mapsto x_2,\ x_3\mapsto
x_3,\ x_4\mapsto \frac{1}{x_4} .
\end{equation}
Similarly, we have
$$ \Gamma_1\mapsto \Gamma_1,\ 
 \Gamma_2\mapsto \Gamma_2,\ 
  \Gamma_3\mapsto \Gamma_3,\ 
   \Gamma_4 \mapsto \Gamma_4^{-1} .$$
We will check that under these transformations, we have 
$M_{1,k}\mapsto M_{1,k}$, $M_{2,k}\mapsto M_{2,k}$, $M_{3,k}\mapsto M_{4,k}$ and $M_{4,k}\mapsto M_{3,k}$
for Macdonald operators, and 
$$ P_{\lambda_1,\lambda_2,\lambda_3,\lambda_4}\mapsto P_{\lambda_1,\lambda_2,\lambda_3,-\lambda_4} 
$$
for Macdonald polynomials as well as $q$-Whittaker functions.

The $\Z_3$ symmetry sends
$$ \omega_1\mapsto \omega_3,\ \omega_2\mapsto \omega_2,\ \omega_3\mapsto \omega_4,\ 
\omega_4\mapsto \omega_1 $$
therefore induces the transformations
\begin{equation}
\label{mapxD}
x_1\mapsto \sqrt{\frac{x_1x_2x_3}{x_4}},\ x_2\mapsto\sqrt{\frac{x_1x_2x_4}{x_3}},\ x_3\mapsto
\sqrt{\frac{x_1x_3x_4}{x_2}},\ x_4\mapsto \sqrt{\frac{x_1}{x_2x_3x_4}} .
\end{equation}
Similarly, we have
$$ \Gamma_1^2\mapsto \frac{\Gamma_1\Gamma_2\Gamma_3}{\Gamma_4},\ 
 \Gamma_2^2\mapsto \frac{\Gamma_1\Gamma_2\Gamma_4}{\Gamma_3},\ 
  \Gamma_3^2\mapsto  \frac{\Gamma_1\Gamma_3\Gamma_4}{\Gamma_2},\ 
   \Gamma_4^2\mapsto \frac{\Gamma_1}{\Gamma_2\Gamma_3\Gamma_4} .$$
We expect that under these transformation, we have 
$M_{1,k}\mapsto M_{3,k}\mapsto M_{4,k}\mapsto M_{1,k}$ as well as $M_{2,k}\mapsto M_{2,k}$
for Macdonald operators, and 
$$ P_{\lambda_1,\lambda_2,\lambda_3,\lambda_4}\mapsto P_{\frac{\lambda_1+\lambda_2+\lambda_3+\lambda_4}{2},\frac{\lambda_1+\lambda_2-\lambda_3-\lambda_4}{2},\frac{\lambda_1-\lambda_2+\lambda_3-\lambda_4}{2},\frac{-\lambda_1+\lambda_2+\lambda_3-\lambda_4}{2}} $$
for Macdonald polynomials as well as $q$-Whittaker functions.

The fact that our operators/polynomials obey these symmetry relations is a highly non-trivial check of our
construction.

\subsubsection{M operators}

\begin{eqnarray*}
M_{1,k}&=&\sum_{\epsilon=\pm 1}\sum_{i=1}^4 \prod_{j\neq i} \frac{x_i^\epsilon x_j}{x_i^\epsilon x_j-1}
\frac{x_i^\epsilon}{x_i^\epsilon-x_j} x_i^{k\epsilon}\, \Gamma_i^{2\epsilon} \\
M_{2,k}&=&M_{1,k}^2-q^2 M_{1,k+1}\,M_{1,k-1}\\
{M}_{3,k}&=&
\sum_{\epsilon_1,\epsilon_2,\epsilon_3,\epsilon_4=\pm 1\atop \epsilon_1\epsilon_2\epsilon_3 \epsilon_4=-1}  
\prod_{1\leq i<j\leq 4} \frac{x_i^{\epsilon_i} x_j^{\epsilon_j}}{x_i^{\epsilon_i}x_j^{\epsilon_j}-1} 
\prod_{i=1}^4 x_i^{\frac{k\epsilon_i}{2}}\, \prod_{i=1}^4 \Gamma_i^{\epsilon_i}\\
{M}_{4,k}&=&
\sum_{\epsilon_1,\epsilon_2,\epsilon_3,\epsilon_4=\pm 1\atop \epsilon_1\epsilon_2\epsilon_3 \epsilon_4=1}  
\prod_{1\leq i<j\leq 4} \frac{x_i^{\epsilon_i} x_j^{\epsilon_j}}{x_i^{\epsilon_i}x_j^{\epsilon_j}-1} 
\prod_{i=1}^4 x_i^{\frac{k\epsilon_i}{2}}\, \prod_{i=1}^N \Gamma_i^{\epsilon_i}\\
\end{eqnarray*}

The expected symmetry relations under both $\Z_2$ and $\Z_3$ automorphisms are easily checked.
For instance in the $\Z_2$ case, we see immediately that $M_{3,k}$ and $M_{4,k}$ are interchanged 
under the transformation
$(x_4,\Gamma_4)\to (x_4^{-1},\Gamma_4^{-1})$ while all other $x_i, \Gamma_i$ remain unchanged,
as the transformation amounts to changing $\epsilon_4\to -\epsilon_4$ in the summation.

In the general $D_N$ case, the $\Z_2$ symmetry of the Dynkin diagram that interchanges the two end-nodes $N-1$ and $N$ implies the symmetry under $x_N\mapsto 1/x_N$ while all other $x$'s remain unchanged, together with $\Gamma_N\mapsto \Gamma_N^{-1}$. It is clear that under this transformation, we have
$M_{N-1,k}^{(D_N)}\mapsto M_{N,k}^{(D_N)}$ and $M_{N,k}^{(D_N)}\mapsto M_{N-1,k}^{(D_N)}$
Indeed, eqs.\eqref{Dtwo} and \eqref{Dthree} are interchanged under the change of summation variable
$\epsilon_N\to -\epsilon_N$, which amounts exactly to $x_N\to1/x_N$ and $\Gamma_N\to \Gamma_N^{-1}$.

\subsubsection{Dual $q$-Whittaker functions}

\begin{eqnarray*}
\Pi_{0,0,0,0}&=&s_{0,0,0,0}\\
\Pi_{1,0,0,0}&=&s_{1,0,0,0}\\
\Pi_{2,0,0,0}&=&s_{2,0,0,0}+q^{-2}s_{1,1,0,0}+q^{-4}s_{0,0,0,0}\\
\Pi_{1,1,0,0}&=&s_{1,1,0,0}+q^{-2}s_{0,0,0,0}\\
\Pi_{3,0,0,0}&=&s_{3,0,0,0}+\frac{1+q^2}{q^4}s_{2,1,0,0}+q^{-6}s_{1,1,1,0}+
\frac{1+q^2+q^4}{q^8}s_{1,0,0,0}\\
\Pi_{2,1,0,0}&=&s_{2,1,0,0}+q^{-2}s_{1,1,1,0}+
\frac{1+q^2}{q^4}s_{1,0,0,0}\\
\Pi_{1,1,1,0}&=&s_{1,1,1,0}+
q^{-2}s_{1,0,0,0}\\
\Pi_{4,0,0,0}&=&s_{4,0,0,0}+\frac{1+q^2+q^4}{q^6}s_{3,1,0,0}+\frac{1+q^4}{q^8}s_{2,2,0,0}+
\frac{1+q^2+q^4}{q^{10}}s_{2,1,1,0}\\
&&+q^{-12}(s_{1,1,1,1}+s_{1,1,1,-1})+\frac{(1+q^4)(1+q^2+q^4)}{q^{12}}s_{2,0,0,0}\\
&&+\frac{(1+q^4)(1+q^2+q^4)}{q^{14}}s_{1,1,0,0}+\frac{1+q^4+q^8}{q^{16}}s_{0,0,0,0}\\
\Pi_{3,1,0,0}&=&s_{3,1,0,0}+q^{-2}s_{2,2,0,0}+
\frac{1+q^2}{q^{4}}s_{2,1,1,0}+q^{-6}(s_{1,1,1,1}+s_{1,1,1,-1})\\
&&+
\frac{1+q^2+q^4}{q^6}s_{2,0,0,0}+\frac{1+q^2+2q^4}{q^8}s_{1,1,0,0}+\frac{1+q^4}{q^{10}}s_{0,0,0,0}\\
\Pi_{2,2,0,0}&=&s_{2,2,0,0}+
q^{-2}s_{2,1,1,0}+q^{-4}(s_{1,1,1,1}+s_{1,1,1,-1})\\
&&+
q^{-4}s_{2,0,0,0}+\frac{1+q^2+q^4}{q^6}s_{1,1,0,0}+\frac{1+q^4}{q^{8}}s_{0,0,0,0}\\
\Pi_{2,1,1,0}&=&s_{2,1,1,0}+q^{-2}(s_{1,1,1,1}+s_{1,1,1,-1})+
q^{-2}s_{2,0,0,0}+\frac{1+q^2}{q^4}s_{1,1,0,0}+q^{-6}s_{0,0,0,0}\\
\Pi_{1,1,1,\epsilon}&=&s_{1,1,1,\epsilon}+q^{-2}s_{1,1,0,0}+q^{-4}s_{0,0,0,0}\\
\Pi_{\frac{1}{2},\frac{1}{2},\frac{1}{2},\frac{\epsilon}{2}}&=&
s_{\frac{1}{2},\frac{1}{2},\frac{1}{2},\frac{\epsilon}{2}}\\
\Pi_{\frac{3}{2},\frac{1}{2},\frac{1}{2},\frac{\epsilon}{2}}&=&s_{\frac{3}{2},\frac{1}{2},\frac{1}{2},\frac{\epsilon}{2}}
+q^{-2}s_{\frac{1}{2},\frac{1}{2},\frac{1}{2},-\frac{\epsilon}{2}}\\
\Pi_{\frac{5}{2},\frac{1}{2},\frac{1}{2},\frac{\epsilon}{2}}&=&s_{\frac{5}{2},\frac{1}{2},\frac{1}{2},\frac{\epsilon}{2}}+q^{-2}s_{\frac{3}{2},\frac{3}{2},\frac{1}{2},\frac{\epsilon}{2}}+\frac{1+q^{2}}{q^4}s_{\frac{3}{2},\frac{1}{2},\frac{1}{2},-\frac{\epsilon}{2}}+\frac{1+q^2}{q^6}s_{\frac{1}{2},\frac{1}{2},\frac{1}{2},\frac{\epsilon}{2}}\\
\Pi_{\frac{3}{2},\frac{3}{2},\frac{1}{2},\frac{\epsilon}{2}}&=&s_{\frac{3}{2},\frac{3}{2},\frac{1}{2},\frac{\epsilon}{2}}+q^{-2}s_{\frac{3}{2},\frac{1}{2},\frac{1}{2},-\frac{\epsilon}{2}}+\frac{1+q^2}{q^4}s_{\frac{1}{2},\frac{1}{2},\frac{1}{2},\frac{\epsilon}{2}}
\end{eqnarray*}
for $\epsilon=\pm 1$.

The expected symmetry relations under the $\Z_3$ Dynkin automorphism are easily checked using the explicit expressions for Weyl-invariant D-type Schur functions \eqref{DNschur}. As an illustration, the reader can check that $\Pi_{2,0,0,0}\mapsto \Pi_{1,1,1,-1}$, as a consequence of $s_{2,0,0,0}\mapsto s_{1,1,1,-1}$,
$s_{1,1,0,0}\mapsto s_{1,1,0,0}$ and $s_{0,0,0,0}\mapsto s_{0,0,0,0}$ under the transformation \eqref{mapxD}. The $\Z_2$ symmetry is simply the covariance of $\Pi$ under $\epsilon\to-\epsilon$.

\end{appendix}

\bibliographystyle{alpha}

\bibliography{refs}
\end{document}